\mtpro\usepackage{mtpro2}\fi
\newcounter{prgline}
\def\..{\,\mathpunct{\ldotp\ldotp}} 
\newcommand{\url}{\cite{myurl}}
\renewcommand{\epsilon}{\varepsilon}
\renewcommand{\phi}{\varphi}
\newtheorem{lemma}{Lemma} 
\newtheorem{theorem}{Theorem}
\newcounter{noqed}
\newcommand{\qed}{ \ifmmode\mbox{
}\fi\rule[-.05em]{.3em}{.7em}\setcounter{noqed}{0}}
\newenvironment{proof}[1][{}]{\noindent{\bf Proof#1.
}\setcounter{noqed}{1}}{\ifnum\value{noqed}=1\qed\fi\par\medskip}
\begin{document} 
\sloppy
\title{Improved space-time tradeoffs for approximate full-text indexing with one edit error\thanks{Most of this work was done when the author was a student at LIAFA, University Paris Diderot - Paris 7. The work was partially supported by the French
 ANR project MAPPI (project number ANR-2010-COSI-004).}}
\author{Djamal Belazzougui}
\affil{Helsinki Institute for Information Technology (HIIT),
Department of Computer Science, University of Helsinki, Finland.}
\bibliographystyle{abbrv}
\maketitle
\begin{abstract}
In this paper we are interested in indexing texts for substring matching queries with one edit error. That is, given a text $T$ of 
$n$ characters over an alphabet of size $\sigma$, we are asked to build a data structure that answers the following query:
find all the $occ$ substrings of the text that are at edit distance at most $1$ from a given string $q$ of length $m$. In this paper we show two new results for this problem. The first result, suitable for an unbounded alphabet, uses $O(n\log^\epsilon n)$ (where $\epsilon$ is any constant such that $0<\epsilon<1$) words of space and answers to queries in time $O(m+occ)$. This improves simultaneously in space and time over the result of Cole et al. The second result, suitable only for a constant alphabet, relies on compressed text indices and comes in two variants: the first variant uses $O(n\log^{\epsilon} n)$ bits of space and answers to queries in time $O(m+occ)$, while the second variant uses $O(n\log\log n)$ bits of space and answers to queries in time $O((m+occ)\log\log n)$. This second result improves on the previously best results for constant alphabets achieved in Lam et al. and Chan et al. 
\end{abstract}
\section{Introduction}
The problem of approximate string matching over texts has been intensively studied. The problem consists in, given a pattern $q$, a text $T$ (the characters of $T$ and $q$ are drawn from the same alphabet of size $\sigma$), and a parameter $k$, to find the starting points of all the substrings of $T$ that are at distance at most $k$ from $q$. There exist many different distances that can be used for this problem. In this paper, we are interested in the edit distance, in which the distance between two strings $x$ and $y$ is defined as the minimal number of edit operations needed to transform $x$ into $y$, where the considered edit operations are deletion of a character, substitution of a character by another and, finally, insertion of a character at some position in the string. Generally two variants of the problem are considered, depending on which of the pattern and the text is considered as fixed. In our case we are interested in the second variant, in which the text is fixed and can thus be processed in advance so as to efficiently answer to queries that consist of a pattern and a parameter $k$. Further, we restrict our interest to the case $k=1$. 
\subsection{Related Work}
The best results we have found in the literature with reference to our problem follow. We only consider the results with worst case space and time bounds. We thus do not consider results like the one in Maa{\ss} and Nowak~\cite{MN05} in which either the query time or the space usage only hold on average on the assumption that the text and/or the patterns are drawn from some random distribution. 
For a general integer alphabet (unbounded alphabet) a result by Amir et al.~\cite{AKLLLR00} further improved by Buchsbaum et al.~\cite{BGW00} has led to $O(n\log^2 n)$~\footnote{In this paper $\log x$ stands for $\lceil\log_2 (\mbox{max}(x,2)\rceil$.} bits of space with query time $O(m\log\log m+occ)$. Later Cole et al.~\cite{CGL04} described an index for an arbitrary number of errors $k$, which for the case $k=1$ uses $O(n\log^2 n)$ bits of space and answers to queries in $O(m+\log n\log\log n+occ)$ time. For the special case of constant-sized alphabets, a series of results culminated with those of Lam et al.~\cite{LSW08}, Chan et al.~\cite{chan2011linear} and Chan et al.~\cite{CLSTW10} with various tradeoffs between the occupied space and query time. 

Belazzougui~\cite{B09} presented a solution to an easier problem: build a data structure for dictionaries so as to support approximate queries with one edit error. In that problem the indexed elements are not text substrings but are, instead, strings coming from a dictionary. Then Belazzougui used his solution to the easier dictionary problem to solve the harder full-text indexing problem. His solution used a brute-force approach: build the proposed dictionary on all sufficiently short text substrings (more precisely, factors with length less than $\log n\log\log n$) and use it to answer queries for sufficiently short query patterns. Queries for longer patterns are answered using the index of Cole et al.~\cite{CGL04}. The main drawback of that solution is that it incurs a large polylogarithmic factor in the space usage (due to the large space needed to index short text substrings). 

By adapting some ideas of Belazzougui~\cite{B09} and combining with two indices described in Chan et al.~\cite{chan2011linear} we are able to remove the additive polylogarithmic term from the query times associated with some of the best previously known results while using the same space (or even less in some cases). When compared with Belazzougui~\cite{B09} our query time is identical but our space usage is much better. 
The reader can refer to Tables~\ref{table:compar_table0} and~\ref{table:compar_table1} for a full comparison between our new results and the previous ones. 

\begin{table}
\centering
\begin{tabular}{|l|l|l|}
  \hline
  Data structure &  Space usage (in bits) & Query time \\
  \hline
  Lam et al.~\cite{LSW08}  & $O(n)$ & $O((m\log\log n+occ)\log^\epsilon n)$\\
  Lam et al.~\cite{LSW08}  & $O(n\log\log n)$ & $O((m\log\log n+occ)\log\log n)$\\
  Lam et al.~\cite{LSW08}  & $O(n\log^\epsilon n)$ & $O(m\log\log n+occ)$\\
  \hline
   Chan et al.~\cite{chan2011linear} & $O(n)$  &  $O(m+\log^{4+\epsilon} n+occ\log^{\epsilon}n)$\\
   Chan et al.~\cite{CLSTW10} & $O(n\log n)$ & $O(m+\log n\log\log n+occ)$ \\
  \hline
  Belazzougui~\cite{B09} & $O(n\log^2 n\log\log n)$ & $O(m+occ)$\\
  \hline
   This paper & $O(n\log\log n)$ & $O((m+occ)\log\log n)$ \\
   This paper & $O(n\log^{\epsilon} n)$ & $O(m+occ)$ \\
  \hline
\end{tabular}
\caption{Comparison of existing solutions for constant alphabet sizes}
\label{table:compar_table0}
\end{table}
\begin{table}
\centering
\begin{tabular}{|l|l|l|}
  \hline
  Data structure &  Space usage (in bits) & Query time \\ 
  \hline
  Buchsbaum et al.~\cite{BGW00} & $O(n\log^2 n)$ & $O(m\log\log n+occ)$\\
  Cole et al.~\cite{CGL04} &  $O(n\log^2 n)$ & $O(m+\log n\log\log n+occ)$\\
  Belazzougui~\cite{B09} & $O(n\log^3 n\log\log n)$ & $O(m+occ)$\\
  This paper & $O(n\log^{1+\epsilon} n)$ & $O(m+occ)$\\ 
  \hline
\end{tabular}
\caption{Comparison of existing solutions for arbitrary alphabets}
\label{table:compar_table1}
\end{table}
As can be seen from the tables, both our indices improve on the state of the art. We should mention that the second result attributed to Lam et al.~\cite{LSW08} in Table 1 is not stated in that paper, but can be easily deduced from the main result in Lam et al.~\cite{LSW08} by using a different time/space tradeoff for the compressed text index implementations. 
The results in Table 1 are all unsuitable for large alphabets as all their query times have a hidden linear dependence on $\sigma$ (which for simplicity is not shown in the table). This means that for very large alphabets of size $\sigma=\Theta(\sqrt n)$  for example, the query time of those algorithms will be unreasonable. By contrast the query times of the algorithms in Table 2 do not have any dependence on the alphabet size. Our result in Table 2 always dominates Cole et al.'s result in both space and time. 


\section{Preliminaries and Outline of the Results}
At the core of our paper is a result for indexing all substrings of the text of length bounded by a given parameter $b$. In particular, we prove the following two theorems: 
\begin{theorem}
\label{constant_fix_theorem}
For any text $T$ of length $n$ characters over an alphabet of constant size, given a parameter $b$, we can build an index of size $O(n(b^\epsilon+\log^\epsilon n)/\epsilon)$ bits  (where $\epsilon$ is any constant such that $0<\epsilon<1$) so that for any given string $q$ of length $m<b$ we can report all of the $occ$ substrings of the text that are at edit distance at most $1$ from $q$ in time $O((m+occ)/\epsilon)$. Alternatively we can build a data structure that occupies $O(n(\log b+\log\log n))$ bits of space and answers to queries in time $O((m+occ)(\log b+\log\log n))$. 
\end{theorem}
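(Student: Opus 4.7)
The plan is to reduce one-edit indexing of length-bounded substrings to exact suffix-tree navigation on both the forward and reversed texts combined with two-dimensional range reporting on a grid of size $O(n)$. I would construct the compact suffix tree of $T$ truncated at string depth $b$, and the analogous compact suffix tree of $T^R$ truncated at string depth $b$. On a constant-sized alphabet both trees have $O(n)$ nodes and an $O(n \log b)$-bit naive encoding; using the tree-extraction techniques behind the succinct representations of [B09] together with the restriction to depth $b$, these can be stored in $O(n \log^\epsilon n)$ bits while supporting constant-time child, weighted-ancestor and suffix-link navigation. For a query $q$ of length $m<b$ I would decompose it at every cut position $i$ into $q_1 = q[0\,..\,i-1]$ and $q_2 = q[i+d\,..\,m-1]$ with $d\in\{0,1,2\}$ corresponding to insertion, substitution and deletion respectively, and for each decomposition find the locus of $q_1$ in the forward tree and the locus of $q_2^R$ in the reverse tree.

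The central observation is that walking the loci of $q_1$ and $q_2^R$ through the trees as $i$ varies can be done incrementally using suffix (Weiner) links so that all $2m$ loci are obtained in total time $O(m)$. Each locus pair defines a horizontal and a vertical range on the grid whose points are the positions of $T$ indexed by (forward suffix-array rank, reverse suffix-array rank). An occurrence at edit distance exactly $1$ with edit at cut position $i$ and type $d$ is then a grid point in the rectangle determined by these two ranges, together with a compatibility condition on the middle character that is enforced by slightly refining the ranges (splitting on the forbidden character for substitution, or leaving the character unconstrained for insertion/deletion). Reporting all occurrences thus reduces to $O(m)$ orthogonal range-reporting queries.

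The key technical step is the grid structure: we need a data structure on $O(n)$ points using $O(n \cdot b^\epsilon)$ bits that reports the $k$ points in an axis-parallel rectangle in $O(1+k)$ time per query, which together with the $O(m)$ queries yields the claimed $O(m+occ)$ bound. Since the ranges arise from loci that evolve by constant-size updates between consecutive $i$'s, I would exploit the fractional-cascading / ball-inheritance framework on a tree of fanout $b^\epsilon$ to amortize the search part of the query against the incremental navigation, so that the $\log^\epsilon n$-type additive term disappears. The $b^\epsilon + \log^\epsilon n$ in the space bound reflects the fanout used for the grid versus the fanout used to represent the suffix-tree topology itself. For the second tradeoff, I would replace both the grid and the compact trees with wavelet-tree based succinct variants over alphabets of size $O(b)$ and $O(\log n)$ respectively, accepting the $O(\log b + \log\log n)$ multiplicative slowdown in every navigation and reporting step.

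The main obstacle is isolating the edit-position bookkeeping from the grid reporting so that the per-position overhead stays $O(1)$ rather than $O(\log^\epsilon n)$; this is where the adaptation of the techniques of [B09] to the one-edit, depth-$b$ setting is essential, and where the restriction $m<b$ is used to ensure that the relevant portions of the trees and the grid fit into the advertised space.
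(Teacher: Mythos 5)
Your approach (bidirectional search with 2D range reporting over the grid of forward/reverse suffix-array ranks) is the classical Amir--Keselman et al.\ reduction and is genuinely different from what the paper does. The paper instead exploits the constant alphabet directly: it enumerates all $O(m\sigma)=O(m)$ candidate strings $p$ at edit distance $1$ from $q$, and for each candidate uses a polynomial rolling hash (whose values on all prefixes of $p$ can be computed in $O(1)$ after an $O(m)$ preprocessing of $q$) to drive a \emph{weak prefix search} structure over the set $U$ of length-$b$ factors of $T$; a prefix-sum structure then converts the returned $U$-range to a suffix-array range, and a single $SA^{-1}/PA^{-1}$ probe against precomputed ranges $L[\cdot],R[\cdot]$ validates the candidate. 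No two-dimensional range reporting ever appears. The $O(n(b^\epsilon+\log^\epsilon n))$-bit budget is just the weak prefix search plus a compressed suffix/prefix array with $O(1)$ access; the slower $O(n(\log b + \log\log n))$-bit variant comes from the $O(\log b)$-time weak prefix search and $O(\log\log n)$-access compressed $SA$.

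The gap in your proposal is the grid step. You need $2$D orthogonal range reporting on $\Theta(n)$ points of an $n\times n$ grid in $O(1+k)$ time with only $O(n\,b^\epsilon)$ \emph{bits}. No such structure exists: even $O(\log\log n + k)$-time reporting needs $\Omega(n\log^{1+\epsilon} n)$ bits (Alstrup--Brodal--Rauhe / Chan--Larsen--P\u{a}tra\c{s}cu style ball inheritance), and constant-time reporting needs more. Your proposed fix --- amortizing the $\log^\epsilon n$-type search overhead by fractional cascading across the $O(m)$ nested queries as $i$ varies --- is exactly the step that is not known to work; this is why the prior bidirectional indexes stop at $O(m\log\log n + occ)$ rather than $O(m + occ)$. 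The citation to the techniques of \cite{B09} does not repair this: that paper's contribution is a hashing trick for constant-time evaluation of $H$ on all prefixes of all one-edit modifications of $q$, which is what lets the paper \emph{replace} the rectangle query with $O(m)$ hash-guided unit-cost lookups. Your scheme also never articulates how the ``compatibility condition on the middle character'' is enforced without a per-character blowup, whereas the paper's candidate enumeration handles this trivially because $\sigma=O(1)$ makes the total number of candidates $O(m)$. As written, the core range-reporting claim is unsupported and the theorem does not follow.
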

Theorem ~\ref{constant_fix_theorem} is obtained by combining ideas from Belazzougui~\cite{B09} with a recent result of Belazzougui et al.~\cite{BBPV10} (weak prefix search) and with known results on compressed text indices. 
The following theorem is an extension of Theorem~\ref{constant_fix_theorem} to unbounded alphabets:  
\begin{theorem}
\label{arb_fix_theorem}
For any text $T$ of length $n$ characters over an alphabet of size $\sigma\leq n$, given a parameter $b$, we can build an index of size $O(n\log n(b^\epsilon+\log^\epsilon n)/\epsilon)$ bits (where $\epsilon$ is any constant such that $0<\epsilon<1$) such that for any given string $q$ of length $m<b$ we can report all of the $occ$ substrings of the text that are at edit distance at most $1$ from $q$ in time $O((m+occ)/\epsilon)$. 
\end{theorem}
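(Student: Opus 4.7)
The plan is to apply Cole et al.'s centroid (heavy-path) decomposition of the suffix tree of $T$ from~\cite{CGL04}, which reduces the one-edit substring-matching problem to $O(n\log n)$ smaller bounded-length subproblems, and then to equip each subproblem with the bounded-length index of Theorem~\ref{constant_fix_theorem} adapted to arbitrary alphabets. The constant-time primitives of~\cite{BBPV10} and~\cite{KN11} are then used at query time to remove the $O(\log n\log\log n)$ additive term present in the original Cole et al.\ bound.

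\textbf{Step 1 (errata skeleton).} Build the suffix tree $ST(T)$ of $T$ and perform a heavy-path decomposition. For every non-heavy branch $v\to u$ of $ST(T)$ and each of the three canonical one-edit operations (insertion, deletion, substitution) applied at the branching character, form an auxiliary trie $A_{v,u}$ whose leaves are the suffixes of $T$ passing through $u$ after the modification. A standard heavy-path argument shows that each suffix of $T$ is charged to $O(\log n)$ such auxiliary tries, so $\sum_{v,u}|A_{v,u}|=O(n\log n)$.

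\textbf{Step 2 (bounded-length index per trie).} On each $A_{v,u}$, install the bounded-length substring-indexing data structure in the style of Theorem~\ref{constant_fix_theorem}, but with edges labeled by arbitrary-alphabet characters stored in $\log\sigma$ bits each. This yields, for each trie, a structure of $O(|A_{v,u}|(b^\epsilon+\log\sigma))$ bits that supports constant-time character descent against any query prefix of length less than $b$. Summing over all tries gives the claimed $O(n\log n(b^\epsilon+\log\sigma))$ bits of total space.

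\textbf{Step 3 (query).} Descend $q$ in $ST(T)$ in constant time per character using the fast hashing of~\cite{BBPV10} at each internal node, for a total of $O(m)$ time. Each time the descent meets a non-heavy branch, use the constant-time weighted level ancestor primitive of~\cite{KN11} to locate the entry point in the corresponding auxiliary trie $A_{v,u}$ and invoke its bounded-length index on the remaining suffix of $q$; this costs $O(m-i)$ at depth $i$, without any polylogarithmic overhead. By the heavy-path amortization, the cost of all $O(\log n)$ auxiliary lookups along the descent telescopes to $O(m)$, and the $occ$ matches are then reported in $O(occ)$ from the suffix-array ranges attached to the trie leaves.

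The main obstacle is the query-time analysis. The original Cole et al.\ bound pays $\Theta(\log n\log\log n)$ because both the weighted ancestor calls and the per-trie lookups cost $\Theta(\log\log n)$ and there are $\Theta(\log n)$ non-heavy branches along the descent. Eliminating this slack requires that every step of the auxiliary machinery---hashing in the main suffix tree, weighted level ancestors, and the per-trie descent---run in genuine constant time, and furthermore that the bounded-length index of Theorem~\ref{constant_fix_theorem} still admit $O(m)$ matching when its comparisons must pay $\log\sigma$ bits per character. It is this last point that forces $\log\sigma$ to appear additively rather than multiplicatively inside the parenthesis of the space bound.
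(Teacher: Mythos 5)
Your high-level plan---centroid-path decomposition of the suffix tree \`a la Cole et al., bounded-length auxiliary structures on the modified-suffix collections, and then constant-time primitives to remove the polylogarithmic additive term---is the right skeleton and matches the paper. But the mechanism you propose for making each step constant-time does not match what the paper actually uses, and there are real gaps in how the $\sigma$-dependence and the query time are controlled.

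First, you misattribute the role of~\cite{KN11}. That citation is not a weighted-level-ancestor structure; in the paper it is a 1D range \emph{color reporting} structure that, given an interval of modified suffixes, returns the $d$ distinct correction characters (or (character, position) pairs) in $O(1)$ per color. This is the crux of how the $O(m\sigma)$ enumeration from Theorem~\ref{constant_fix_theorem} is avoided for large alphabets: instead of trying all $\sigma$ substitution/insertion characters, one queries the relevant correction tree once via weak prefix search, obtains an interval, and then pulls out only the distinct colors actually present. Your proposal never explains how it escapes the $O(m\sigma)$ blowup that made Theorem~\ref{constant_fix_theorem} unsuitable for large $\sigma$; simply ``adapting it to arbitrary alphabets'' does not accomplish this. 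You also never mention the split into type-1 substitution, type-1 deletion, and type-2 substitution trees, and the type-2 tree (wildcard substitution) is precisely what makes the correction character recoverable from the stored color.

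Second, the query-time analysis is off. You argue that an auxiliary lookup at depth $i$ ``costs $O(m-i)$'' and that the $O(\log n)$ lookups ``telescope'' to $O(m)$; but $O(\log n)$ terms each of size up to $m$ do not telescope to $O(m)$ without a much stronger structural claim, and the paper makes no such claim. The paper's argument is different and cleaner: after an $O(m)$ preprocessing that tabulates the polynomial hash of every prefix and suffix of $q$, each weak prefix search over a correction tree is $O(1)$, each color is produced in $O(1)$, and each check via Lemma~\ref{check_occ_lemma} is $O(1)$ (since the uncompressed $SA^{-1}, PA^{-1}$ are kept). Hence each correction-tree query costs $O(1+occ_i)$, and summing over the at most $\min(m,\log n)$ centroid paths gives $O(m+occ)$. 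The hashing of~\cite{BBPV10} is not used to ``descend $q$ in $ST(T)$ in constant time per character'' in the sense you suggest; the $O(m)$ suffix-tree/prefix-tree traversal comes from ordinary suffix-link walking, and~\cite{BBPV10} is the weak prefix search built over the length-$b$ prefixes of the modified suffixes. As written, your Step~3 would not yield $O(m+occ)$.
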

Theorem ~\ref{arb_fix_theorem} is obtained by using one crucial idea that was used in Cole et al.'s result~\cite{CGL04} (heavy-light decomposition of the suffix tree) in combination with the weak prefix search result of Belazzougui et al.~\cite{BBPV10} and a recent result of Sadakane~\cite{Sa07} (1D-colored rangeed reporting). 

Theorem~\ref{arb_fix_theorem} can be used for any alphabet size while Theorem~\ref{constant_fix_theorem} holds only for a constant-sized alphabet. Both theorems can only be used for matching strings of bounded length but provide an improvement when used in combination with previous results that are efficient only for long strings.

Theorem~\ref{constant_fix_theorem} gives an immediate improvement for constant-sized alphabets when combined with a result appearing in Chan et al.~\cite{chan2011linear}:
\begin{theorem}
\label{full_constant_theorem}
For any text $T$ of length $n$ over an alphabet of size $\sigma=O(1)$ we can build the following indices, which are both able to return for any query string $q$ of length $m$, the $occ$ occurrences of substrings of $T$ that are at edit distance at most $1$ from $q$:
\begin{itemize}
\item An index that occupies $O(n \log^\epsilon n/\epsilon)$ bits of space and answers to queries in time $O((m+occ)/\epsilon)$ where $\epsilon$ is any constant such that $0<\epsilon<1$. 
\item An index that occupies $O(n\log\log n)$ bits of space and answers to queries in time $O((m+occ)\log\log n)$. 
\end{itemize}
\end{theorem}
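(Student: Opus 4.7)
\medskip

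\noindent\textbf{Proof proposal.} The plan is a standard short-versus-long pattern dichotomy: route queries of length $m<b$ to Theorem~\ref{constant_fix_theorem} and queries of length $m\geq b$ to an index from~\cite{CLSTW06b} specifically tuned for the long-pattern regime, for a threshold $b$ chosen to balance both sides. The query procedure reads $|q|$ and branches to the appropriate data structure.

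For the first variant, set $b=\Theta(\log n)$. Theorem~\ref{constant_fix_theorem} then furnishes a structure of $O(n(b^\epsilon+\log^\epsilon n))=O(n\log^\epsilon n)$ bits answering every short query in $O(m+occ)$ time. For $m\geq b$, we invoke a long-pattern index of~\cite{CLSTW06b} whose space fits within $O(n\log^\epsilon n)$ bits (possibly after retuning internal constants) and whose query cost is $O(m)$ plus additive and multiplicative polylogarithmic slack. The additive $\operatorname{poly}(\log n)$ term is absorbed into $O(m)$ because $m\geq b$, and the multiplicative factor on $occ$ is collapsed to $O(1)$ by choosing the exponent of the reporting sub-structure strictly smaller than the $\epsilon$ used in the space bound. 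Adding the two pieces yields the claimed $O(n\log^\epsilon n)$ bits and $O(m+occ)$ time.

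For the second variant, repeat the construction with the compressed branch of Theorem~\ref{constant_fix_theorem}. Taking $b$ polylogarithmic in $n$ gives $O(n(\log b+\log\log n))=O(n\log\log n)$ bits and short-pattern query time $O((m+occ)(\log b+\log\log n))=O((m+occ)\log\log n)$. Once $m\geq b$ one has $\log\log n=\Theta(\log\log m)$, delivering the advertised $O((m+occ)\log\log m)$ bound; patterns of length below $\log\log n$ are dispatched by a tiny precomputed table of polylog-in-$n$ size. For $m\geq b$ we attach a~\cite{CLSTW06b} long-pattern index fitting in $O(n\log\log n)$ bits, whose polylog excess terms are likewise absorbed into $m\cdot\log\log m$.

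The main obstacle is aligning the parameters of the~\cite{CLSTW06b} long-pattern index with the threshold $b$ so that, in the regime $m\geq b$, every additive polylog-of-$n$ term becomes $\leq m$ (or $\leq m\log\log m$) and every factor multiplying $occ$ collapses to $O(1)$ (or $O(\log\log m)$), while the total space stays within $O(n\log^\epsilon n)$ bits (resp.\ $O(n\log\log n)$ bits). Once $b$, the exponent $\epsilon$, and the internal constants of the long-pattern index are tuned so that these inequalities hold, the short/long dichotomy delivers both claimed tradeoffs immediately.
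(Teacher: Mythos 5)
Your high-level plan is the one the paper uses---a short/long dichotomy routing small $m$ to Theorem~\ref{constant_fix_theorem} and large $m$ to the long-pattern index from~\cite{CLSTW06b}---but the threshold you pick breaks the argument. You set $b=\Theta(\log n)$, and then claim that for $m\geq b$ the additive $\operatorname{poly}(\log n)$ term of the long-pattern index is ``absorbed into $O(m)$.'' That step fails: the relevant~\cite{CLSTW06b} index (Lemma~\ref{CLSTW06b0_fix_theorem} in the paper) achieves $O(m+occ)$ only when $m\geq\log^4 n\log\log n$, so with your $b=\Theta(\log n)$ there is an entire range $\log n\ll m<\log^4 n\log\log n$ in which neither side of your dichotomy delivers $O(m+occ)$. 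The paper instead sets $b=\log^4 n\log\log n$---exactly the crossover length of the long-pattern index---and compensates by choosing $\epsilon=\delta/5$ in Theorem~\ref{constant_fix_theorem} so that $n\cdot b^\epsilon = n(\log^4 n\log\log n)^{\delta/5}=O(n\log^\delta n)$ still meets the target space. You never address this interaction between $b$ and $\epsilon$, which is the whole content of ``tuning the parameters'' that you defer.

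Two further points are off. First, your remark about ``collapsing the multiplicative factor on $occ$ by choosing the exponent of the reporting sub-structure strictly smaller than $\epsilon$'' misreads which~\cite{CLSTW06b} result is being used: the lemma the paper invokes already gives $O(m+occ)$ with no $\log^\epsilon n$ factor on $occ$ once $m$ exceeds the polylog threshold, so there is nothing to collapse; you appear to be reasoning about a different tradeoff from that paper. Second, your argument that $\log\log n=\Theta(\log\log m)$ once $m\geq b$ is false for $b=\operatorname{polylog}(n)$: $m\geq(\log n)^c$ only gives $\log\log m=\Omega(\log\log\log n)$, not $\Theta(\log\log n)$. The paper's own proof in fact only establishes $O((m+occ)\log\log n)$ (consistent with the abstract and Table~\ref{table:compar_table0}); the $\log\log m$ in the theorem statement appears to be a typo, and your side-table gadget does not rescue the stronger form.
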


Theorem~\ref{arb_fix_theorem} can also be combined with another result that has appeared in Chan et al.~\cite{chan2011linear} to get the following result suitable for arbitrary alphabet sizes:
\begin{theorem}
\label{full_arb_theorem}
For any text $T$ of length $n$ over any integer alphabet of size $\sigma\leq n$ we can build an index that occupies $O(n\log^{1+\epsilon}n/\epsilon)$ bits of space and is able to return for any query string $q$ of length $m$, all the $occ$ occurrences of substrings of $T$ that are at edit distance at most $1$ from $q$ in time $O((m+occ)/\epsilon)$. 
\end{theorem}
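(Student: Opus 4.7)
The proof plan is to combine Theorem \ref{arb_fix_theorem} for short patterns with a long-pattern $1$-error index derived from \cite{CLSTW06b} via a standard length-threshold split. Fix a pattern-length threshold $b=\Theta(\log n)$, chosen large enough that the $\operatorname{poly}(\log n)$ additive overhead of the long-pattern machinery is absorbed into the $m$ term. For any query with $m\leq b$, I invoke the index of Theorem \ref{arb_fix_theorem} instantiated with this $b$: it uses $O(n\log n(\log^\epsilon n+\log\sigma))$ bits and answers in $O(m+occ)$ time. This is the component that accounts for the $\log^\epsilon n$ in the final bound and produces the entry for this paper in Table~\ref{table:compar_table1}.

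For queries with $m>b$, I invoke the long-pattern $1$-error index of \cite{CLSTW06b} built on top of a compressed self-index of $T$ over an alphabet of size $\sigma$. That structure fits in $O(n\log\sigma)$ bits. Its published query bound has the shape $O(m+\operatorname{poly}(\log n)+occ\cdot f(n,\sigma))$; the additive polylog term collapses into $O(m)$ once $m\geq b$ if the constant in $b$ is chosen sufficiently large, and the per-occurrence factor $f(n,\sigma)$ is driven down to $O(1)$ by tuning the internal filtering parameter against the lower bound $m=\Omega(\log n)$. The two indices are stored on the same text, and each query is routed according to $|q|$ in constant time; when $|q|=b$ either index can be used, so correctness is unambiguous.

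The total space is the sum of $O(n\log n(\log^\epsilon n+\log\sigma))$ bits for the short-pattern index of Theorem~\ref{arb_fix_theorem} and $O(n\log\sigma)$ bits for the long-pattern compressed index, i.e.\ $O(n(\log^\epsilon n+\log\sigma))$ in the word model, as listed for this paper in Table~\ref{table:compar_table1} (the theorem statement's unit should therefore be read as machine words of $\Theta(\log n)$ bits, consistent with the abstract and the table). The query time in both regimes is $O(m+occ)$.

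The main obstacle I anticipate is adapting the long-pattern index of \cite{CLSTW06b}, whose constant-alphabet version appears in Table~\ref{table:compar_table0}, to arbitrary $\sigma$ within $O(n\log\sigma)$ bits and $O(m+occ)$ time for $m\geq b$. This requires replacing any internal components whose cost hides a $\sigma$ factor by their succinct integer-alphabet analogues --- wavelet-tree-based rank/select, a compressed suffix tree supporting child, suffix-link and LCA operations in $O(1)$ time, and a succinct geometric range-reporting structure for the filtering step --- and verifying that a threshold of $b=\Theta(\log n)$ suffices to erase both the additive $\operatorname{poly}(\log n)$ and the per-occurrence $\log^\epsilon n$ factor inherited from the constant-alphabet bound. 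Once these ingredients are in place, the length-based dispatch and the merging of the two output streams are routine.
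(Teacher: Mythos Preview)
Your high-level architecture --- split on pattern length, use Theorem~\ref{arb_fix_theorem} below the threshold and an index from \cite{CLSTW06b} above it --- is exactly what the paper does. Your observation that the space bound in the statement should be read in $\Theta(\log n)$-bit words (matching Table~\ref{table:compar_table1}) is also correct. However, two concrete choices in your plan diverge from the paper in ways that create real work you do not need to do.

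First, the long-pattern ingredient. You propose to take the \emph{constant-alphabet} $O(n)$-bit index from Table~\ref{table:compar_table0} and retrofit it for general $\sigma$ by swapping in wavelet trees, a compressed suffix tree, succinct range reporting, etc.; you yourself flag this as the main obstacle. The paper avoids all of this: it invokes a \emph{different} result already present in \cite{CLSTW06b} (their Section~2.3, Theorem~1), which for arbitrary $\sigma$ gives an $O(n\log n)$-bit index answering $1$-error queries in $O(m+occ)$ time whenever $m\geq \log^3 n\log\log n$. No adaptation is required; the result is quoted as Lemma~\ref{CLSTW06b1_fix_theorem} and used as a black box. Your proposed adaptation may well go through, but it is unnecessary, and as written it is only a sketch (``once these ingredients are in place\ldots'') rather than a proof.

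Second, the threshold. Because the paper relies on the lemma above, the cut-off must be $b=\log^3 n\log\log n$, not $\Theta(\log n)$. With this $b$, the space of Theorem~\ref{arb_fix_theorem} is $O\bigl(n\log n\,((\log^3 n\log\log n)^{\epsilon'}+\log\sigma)\bigr)$ bits; choosing $\epsilon'=\epsilon/4$ (say) gives the stated $O(n\log n(\log^{\epsilon} n+\log\sigma))$ bits, and for $m\geq b$ the additive $\log^3 n\log\log n$ in the long-pattern query time is absorbed by $m$. Your choice $b=\Theta(\log n)$ is internally inconsistent with ``large enough to absorb the $\operatorname{poly}(\log n)$ overhead'' unless that overhead happens to be $O(\log n)$, which it is not for the result you would actually need.
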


Our new methods for proving Theorems~\ref{constant_fix_theorem} and ~\ref{arb_fix_theorem} make use of some ideas introduced in Belazzougui~\cite{B09} combined with tools which were recently proposed in Belazzougui et al.~\cite{BBPV10} and Sadakane~\cite{Sa07} and with compressed text indices proposed in~\cite{FM05,GV05}. In Belazzougui~\cite{B09} a new dictionary for approximate queries with one error was proposed. A naive application of that dictionary to the problem of full-text indexing was also proposed in that paper. However while this leads to the same $O(m+occ)$ query time achieved in this paper, the space usage was too large,  namely $O(n(\log n \log \log n)^2\log\sigma)$ bits of space for alphabet of size $\sigma$. Nonetheless we will borrow some ideas from that paper and use them to prove our main results. 

The paper is organized as follows: we begin with the data structure suitable for constant-sized alphabets (Theorems~\ref{constant_fix_theorem} and~\ref{full_constant_theorem}) in Section~\ref{section:const_alpha} before showing the data structure for large alphabets (Theorems~\ref{arb_fix_theorem} and ~\ref{full_arb_theorem}) in Section~\ref{section:large_alpha}. We  conclude the paper in Section~\ref{section:conclusion}.

\subsection{Model and Notation}
In the remainder, we note by $\overline{x}$ the reverse of the string $x$. That is $\overline{x}$ is the string $x$  written in reverse order. For a given string $s$, we note by $s[i,j]$ or by $s[i..j]$ the substring of $s$ spanning the characters $i$ through $j$. We assume that the reader is familiar with the trie concept~\cite{Fr60,Kn73} and with classical text indexing data structures like suffix trees and suffix arrays (although we provide a brief recall in the next subsection). The model assumed in this paper is the word RAM model with word length $w=\Theta(\log n)$ where $n$ is the size of the considered problem. We further assume that standard arithmetic operations including multiplications can be computed in constant time. We assume that the text $T$ to be indexed is of length $n$ and its alphabet is of size $\sigma<n$. At the end of the paper, we show how to handle the extreme case $\sigma\geq n$. 
\subsection{Basic Definitions}
We now briefly recall some basic (and standard) text indexing data structures that will be extensively used in this paper. 
\subsubsection{Suffix array}
A suffix array~\cite{MM93} (denoted $SA[1..n]$) built on a text $T$ of length $n$ just stores the pointers to the suffixes of $T$ in sorted order (where by order we mean the usual lexicographic order defined for strings). Clearly a suffix array occupies $n\log n$ bits. A suffix array example is illustrated in Figure~\ref{pic:suffix_array}
\begin{figure}[htb] 
\centering\includegraphics[scale=1]{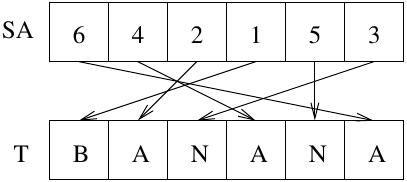} 
\caption{Suffix array for the text $banana$} 
\label{pic:suffix_array} 
\end{figure}
\subsubsection{Suffix tree}
A suffix tree~\cite{Wr73,Mc76} is a \emph{compacted} trie built on the suffixes of a text $T$ appended with $\#$, a special character outside of the original alphabet and smaller than all the characters of the original alphabet. 
A suffix tree has the following properties:
\begin{itemize}
\item Every suffix of $T$ is associated with a leaf in the tree.
\item A factor $p$ of $T$ is associated with an internal node in the tree iff there exist two characters $a$ and $b$ such that $pa$ and $pb$ are also factors of $T$. Each internal has thus at least two children, one associated with a factor that starts with $pa$ and another associated with a factor that starts with $pb$. 
\item The subtree rooted at any internal node associated with a prefix $p$ contains (in its leaves) all the suffixes of $T\#$ which have $p$ as a prefix. 
\item Suppose that an edge connects an internal node $x$ associated with a factor $p$ to a node $y$ (which could be a leaf or an internal node) associated with a string $s=pcs'$ (which could either be a suffix or another factor) where $c$ is a character and $s'$ is string. Then the character $c$ will be called the \emph{label} of $y$ and $s'$ will be called the compacted path of $y$. 
\end{itemize}
 The essential property of a suffix tree is that it can be implemented in such a way
that it occupies $O(n)$ pointers (that is $O(n\log n)$ bits) in addition to the text and that given any factor $p$ of $T$ it is possible to find all the suffixes of $T$ which are prefixed by $p$ in $O(|p|)$ time. A suffix tree can also be augmented in several ways so as to support many other operations, but in this paper we will use very few of them. A suffix tree example is illustrated in Figure~\ref{pic:suf_tree}.

For a more detailed description of the suffix array or the suffix tree, the reader can refer to any book on text indexing algorithms~\cite{Gu97,CR03}.
\begin{figure}[htb] 
\centering\includegraphics[scale=1]{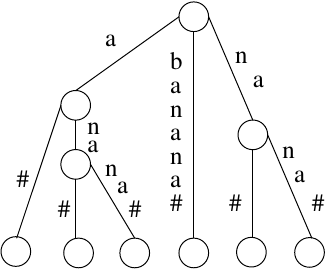} 
\caption{Suffix tree for the text $banana$} 
\label{pic:suf_tree} 
\end{figure}

\section{Solution for Constant-Sized Alphabets}
\label{section:const_alpha}
In this section we give a proof of Theorems~\ref{constant_fix_theorem} and~\ref{full_constant_theorem}. 
Theorem~\ref{full_constant_theorem} is proved in Section~\ref{section:arb_const_query}. This theorem uses the data structure of Theorem~\ref{constant_fix_theorem}, which is described in Section~\ref{section:const_fix_data_struct}. Then  we describe how the queries are executed on the data structure in Section~\ref{section:const_fix_query}. Finally, in Section~\ref{subsec:dupl_occ} we show how we deal with duplicate occurrences. 
\subsection{Solution for Arbitrary Pattern Length}
\label{section:arb_const_query}
We use the following lemma proved by Chan et al.~\cite[Section 3.2]{chan2011linear}.
\begin{lemma}~\cite[Section 3.2]{chan2011linear}
\label{CLSTW06b0_fix_lemma}
For any text $T$ of length $n$ characters over an alphabet of constant-size we can build an index of size $O(n)$ bits so that we can report all of the $occ$ substrings of the text which are at edit distance $1$ from any pattern $q$ of length $m\geq \log^4 n\log\log n$ in time $O(m+occ)$. 
\end{lemma}

The solution for Theorem~\ref{full_constant_theorem} is easily obtained by combining Theorem~\ref{constant_fix_theorem} with Lemma~\ref{CLSTW06b0_fix_lemma} in the following way: we first build the index of Chan et al.~\cite{chan2011linear} whose query time is upper bounded by $O(m+occ)$ whenever $m\geq \log^4 n\log\log n$ and whose space usage is $O(n)$ bits. Then we build the data structure of Theorem~\ref{constant_fix_theorem} in which we set $b=\log n^4\log\log n$ and $\epsilon=\delta/5$ where $\delta$ is any constant that satisfies $0<\delta<1$. In the case where we have a string of length less than $b$, we use the index of Theorem~\ref{constant_fix_theorem} to answer the query in time $O((m+occ)/\epsilon)=O((m+occ)/\delta)$ when using the first variant, or in time $O((m+occ)\log\log n)$ when using the second variant. In the case where we have a string of length at least $b=\log^4 n\log\log n$, we use the index of Chan et al.~\cite{chan2011linear} answering to queries in time $O(m+occ)$. The space is thus dominated by our index, which uses either $O(n(\log^4 n\log\log n)^{\epsilon}/\epsilon)=O(n\log^\delta n/\delta)$ or $O(n\log(\log^4 n\log\log n))=O(n\log\log n)$ bits of space.

\subsection{Data Structure for Short Patterns}
\label{section:const_fix_data_struct}
Our data structure for short patterns relies on a central idea used in Belazzougui~\cite{B09}. That paper was concerned with building an approximate dictionary that had to support searching queries that tolerate one edit error. The idea for obtaining the result was that of using a hash-based dictionary combined with a trie and a reverse trie, so that finding all the strings in the dictionary which are at distance $1$ from a pattern $q$ of length $m$ can be done in constant time (amortized) per pattern character and constant time per reported occurrence. This constant time derives from two facts:
\begin{enumerate}
\item If the dictionary uses some suitable perfect hash function $H$, then after we have done a preprocessing step on $q$ in $O(m)$ time, the computation of $H(p)$ takes constant time for any string $p$ at distance $1$ from $q$. 
\item Using the trie and reverse trie, the matching of any candidate string $p$ at distance $1$ from $q$ can be verified in constant time (this idea has frequently been used before, for example by Brodal and G{\c{a}}sieniec~\cite{BG96}). 
\end{enumerate}
In our case we will use different techniques from the ones used in Belazzougui~\cite{B09}. As we are searching in a text rather than a dictionary, we will be looking for suffixes prefixed by some string $p$ instead of finding exact matching entries in a dictionary. For that purpose we will replace the hash-based dictionary with a weak prefix search data structure~\cite{BBPV10},  which will allow us to find a range of candidate suffixes prefixed by any given string $p$ at edit distance $1$ from $q$. For checking the candidate suffixes, we will replace the trie and reverse trie with compressed suffix arrays built on the text and the reverse of the text. Both the weak prefix search and the compressed suffix arrays have the advantage of achieving interesting tradeoffs between space and query time. 

We now describe in greater detail the data structure we use to match patterns of bounded length over small alphabets (Theorem~\ref{constant_fix_theorem}). This data structure uses the following components:

\begin{enumerate}
\item A suffix array $SA$ built on the text $T$. 
\item A suffix tree $S$ built on the text $T$. In each node of the suffix tree representing a factor $p$ of $T$, we store the range of suffixes which start with $p$. That is we store a range $[i,j]$ such that any suffix starts with $p$ iff its rank $k$ in lexicographic order is included in $[i,j]$. 
\item A reverse suffix tree $\overline{S}$ built on the text $\overline{T}$, the reverse of the text $T$ (we could call $\overline{S}$ a prefix tree as it actually stores prefixes of $T$). In each node of $\overline{S}$ representing a factor $p$ we store the range of suffixes of $\overline{T}$ which start with $p$. That is we store a range $[i,j]$ such that any suffix of $\overline{T}$ starts with $p$ iff its rank $k$ in lexicographic order among all the suffixes of $\overline{T}$ is included in $[i,j]$. 
\item A table $SA^{-1}[1..n]$. This table stores for each suffix $T[i..n]$ for all $1\leq i\leq n$, the rank of the suffix $T[i..n]$ in lexicographic order among all the suffixes of $T$. 
\item A table $PA^{-1}[1..n]$. This table stores for each prefix $T[1..i]$ the rank of the reverse of prefix $T[1..i]$ in lexicographic order among the reverses of all prefixes of $T$.  
\item A polynomial hash function $H$~\cite{KR87} parameterized with a prime $P>n^4$ and an integer $r\in[1,P-1]$ (a seed). For a string $x$ we have $H(x)=(x[1]\cdot r+x[2]\cdot r^2+...+x[|x|]\cdot r^{|x|})\bmod P$. The details of the construction are described below. The hash function essentially uses just $O(\log n)$ bits of space to store the numbers $P$ and $r$. 
\item A weak prefix search data structure (which we denote by $W_0$) built on the set $U$, the set of substrings (factors) of $T$ of fixed length $b$ characters (to which we add $b-1$ artificial factors obtained by appending $\#^{b-i}$ to every suffix of $T$ of length $i<b$ ). Note that $|U|\leq n$. This data structure, which is described in Belazzougui et al.~\cite{BBPV10}, comes in two variants. The first one uses $O(n(b^\epsilon+\log\log\sigma))$ bits of space for any constant $0<\epsilon<1$ and answers to queries in $O(1)$ time.  The second one uses space $O(n(\log b+\log\log\sigma))$ and has query time $O(\log b)$. We note the query time of the weak prefix search data structure by $t_{W_0}$. The details are described below. 
\item Finally a prefix-sum data structure $V_0$ built on top of an array $C_0[1..|U|]$ which stores for every $p\in U$ sorted in lexicographic order, the number of suffixes of $T$ prefixed by $p$ (for each of the artificial strings this number is set to one). This prefix-sum data structure uses $O(|U|)=O(n)$ bits of space and answers in constant time to the following queries: given an index $i$ return the sum $\sum_{1\leq j\leq i} C_0[1..|U|]$. The details are given below. 
\end{enumerate}
Note that the total space usage is dominated by the text indexing data structures (the suffix and prefix trees and the structures $SA$,$SA^{-1}$,$PA$,$PA^{-1}$) which occupy $O(n\log n)$ bits of space. These data structures are illustrated in Figures~\ref{pic:text_idx_DS} and ~\ref{pic:text_idx_DS2}. 

\begin{figure}[htb] 
\centering\includegraphics[scale=0.8]{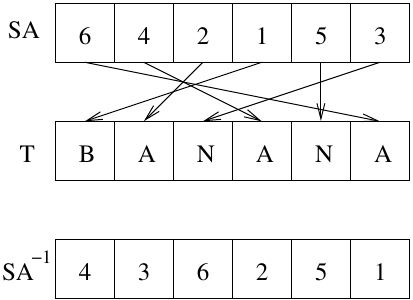} 
\includegraphics[scale=0.8]{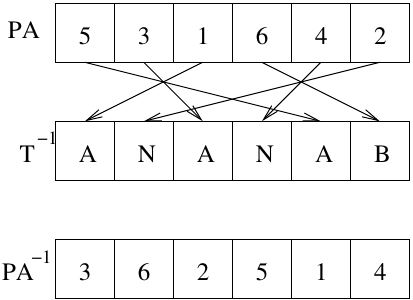} 

\caption[justification=centering]{Data structures $SA$,$SA^{-1}$,$PA$ and $PA^{-1}$ for the string $banana$} 

\label{pic:text_idx_DS} 
\end{figure}

\begin{figure}[htb] 
\centering\includegraphics[scale=1]{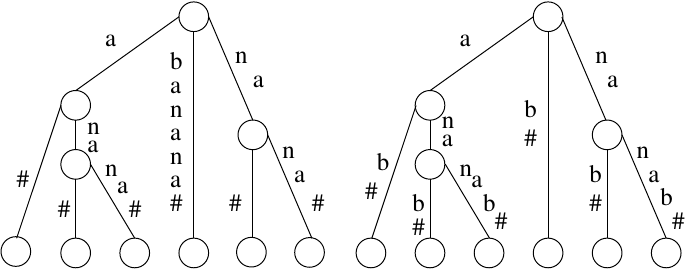} 
\caption{The suffix tree (on the left) and the prefix tree (on the right) for the string $banana$} 
\label{pic:text_idx_DS2} 
\end{figure}

We now describe in detail the results from the literature that will be used to implement the data structures above. 

\subsubsection{Text indexing data structures}
The only operation we need to do on the prefix tree is, for a given pattern $q$, to determine for each prefix $p$ of $q$ the range of all prefixes of $T$ which have $p$ as a suffix. Similarly for the suffix tree we only need to know for each suffix $s$ of $q$ the range of suffixes which are prefixed by $s$. 
The classical representations for our text indexing data structures ($SA$,$SA^{-1}$,$PA$, suffix and prefix trees) all occupy $O(n\log n)$ bits of space. However in our case, we need to use less than the $O(n\log n)$ bits needed by the classical representations. We will thus make use of compressed representations of the text indexing data structures~\cite{FM05,GV05}. In particular we use the following results: 
\begin{enumerate}
\item For every prefix $p_i=q[1,i]$ of $q$ of length $i$ determine the range $[pl_i,pr_i]$ of prefixes of $T$ which are suffixed by $p_i$. This can be  accomplished incrementally in $O(m)$ time by following the suffix links~\footnote{A suffix link connects a suffix tree node associated with a factor $cp$ (with $c$ being a character) to the suffix tree node associated with the factor $p$.} in the prefix tree $\overline{S}$. That is, deducing the range corresponding to the prefix of $q$ of length $i$ from the range of the prefix of $q$ of length $i+1$ in $O(1)$ time (following a suffix link at each step takes $O(1)$ time). In the context of compressed data structures, this can be accomplished using the backward search on the compressed representation of the prefix array $PA$~\cite{FM05} still in time $O(m)$ and representing $PA$ in $O(n)$ bits only (assuming a constant alphabet)~\footnote{$PA$ can be represented in compressed suffix array representation of~\cite{FM05}, since $PA$ is actually the suffix array of the reverse of the text.}. In this case the range corresponding to the prefix of $q$ of length $i+1$ (that is, $p_{i+1}$) is deduced from the range corresponding to the prefix of $q$ of length $i$ (that is, $p_i$). 

\item For every suffix $s_i=q[m-i+1,m]$ of $q$ of length $i$ determine the range $[sl_i,sr_i]$ of suffixes of $T$ which are prefixed by $s_i$. This can be done in a similar way in total $O(m)$ time by either following suffix links in a standard representation of the suffix tree $S$ or by backward search~\cite{FM05} in a compressed representation of the suffix array $SA$. The compressed representation occupies $O(n)$ bits only.  
\item For any $i$ we need to have a fast access to $SA[i]$,$SA^{-1}[i]$,$PA[i]$,$PA^{-1}[i]$. In case those four tables are represented  explicitly in $O(n\log n)$ bits of space, the access time is trivially $O(1)$. However in the context of compressed representation, we need to use less than $O(n\log n)$ bits of space and still be able to have fast access to the arrays. 
\end{enumerate}
The first two results can be summarized with the following lemma:
\begin{lemma}~\cite{FM05}
Given a text $T$ of length $n$ over a constant-sized alphabet, we can build a data structure with $O(n)$ bits of space such that given a pattern $q$ of length $m$, we can in $O(m)$ time determine:
\begin{itemize}
\item the range of suffixes of $T$ (sorted in lexicographic order) prefixed by $s_i$ for all suffixes $s_i$ of $q$ of length $i\in[1..m]$.
\item the range of prefixes of $T$ (sorted in reverse lexicographic order) suffixed by $p_i$ for all prefixes $p_i$ of $q$ of length $i\in[1..m]$.
\end{itemize}
If the alphabet is non constant, then we can obtain the same results using $O(n\log\sigma)$ bits of space~\cite{BN11}. 
\end{lemma}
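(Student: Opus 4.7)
The lemma is the standard backward-search capability of the FM-index, once the two directions of interest are matched up with the two underlying texts. I would treat each bullet separately, with essentially the same argument.

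For the first bullet (ranges of suffixes of $T$ prefixed by $s_i$), I would build the FM-index of $T$: the Burrows--Wheeler transform of $T$, the array $C[\cdot]$ of cumulative symbol counts, and a $\rank$ structure on the BWT. For constant-size $\sigma$ this bundle fits in $O(n)$ bits and supports $\rank_c$ on the BWT in $O(1)$ time. Starting from the full interval $[1,n]$ (which corresponds to the empty pattern) and feeding the characters of $q$ into the LF backward-search step in the order $q[m], q[m-1], \ldots, q[1]$, the interval maintained after the $i$-th step is by definition the suffix-array range $[sl_i, sr_i]$ of suffixes of $T$ that begin with $s_i = q[m-i+1..m]$. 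Recording the interval after every step yields all $m$ ranges in $O(m)$ total time.

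For the second bullet I would build the same FM-index on $\overline{T}$. A prefix $T[1..j]$ of $T$ ends with $p_i = q[1..i]$ iff the corresponding suffix $\overline{T}[n-j+1..n]$ of $\overline{T}$ begins with $\overline{p_i}$, and by the definition of the reverse suffix tree $\overline{S}$ recalled earlier in the paper the lexicographic rank of a prefix of $T$ is inherited from the rank of its reversal among suffixes of $\overline{T}$. Hence the sought range $[pl_i, pr_i]$ coincides with the suffix-array range of $\overline{T}$ prefixed by $\overline{p_i}$. Since $\overline{p_i} = \overline{q}[m-i+1..m]$ is a suffix of $\overline{q}$, I would run backward search on the FM-index of $\overline{T}$ driven by $\overline{q}$, processing $\overline{q}[m], \overline{q}[m-1], \ldots, \overline{q}[1]$ in order and recording the interval after every step; this again yields all $m$ ranges in $O(m)$ time.

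For arbitrary alphabet size the only ingredient that has to change is the $\rank$ structure on the BWT: storing the BWT explicitly together with an $O(1)$-time $\rank$ data structure per symbol (or equivalently storing the suffix tree of $X \in \{T, \overline{T}\}$ with hash-table child pointers and suffix/Weiner links) fits in $O(n\log n)$ bits and preserves the $O(1)$ per-step cost, so both traversals still run in $O(m)$ time. I do not anticipate any substantive obstacle: the statement merely repackages a well-known primitive, and the only care required is the bookkeeping to verify that the paper's convention for ordering prefixes of $T$ (inherited from $\overline{S}$) agrees with the suffix-array order of $\overline{T}$, and to match the ``suffixes of $q$'' vs.\ ``prefixes of $q$'' orientation with the choice of $T$ vs.\ $\overline{T}$.
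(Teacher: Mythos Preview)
Your proposal is correct and matches the paper's approach. The paper does not give a separate proof of this lemma; it is stated as a citation of~\cite{FM05}, and the surrounding text just before the lemma already sketches exactly the backward-search argument you describe (one FM-index on $T$ for the suffix ranges, one on $\overline{T}$---equivalently on the prefix array $PA$---for the prefix ranges, with the $O(n\log n)$-bit uncompressed suffix-tree/suffix-link alternative for non-constant alphabets).
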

The third needed text indexing result is summarized with the following lemma:
\begin{lemma}~\cite{GV05,Rao02,LSW05}
\label{lemma:compr_suf_array}
Assuming a constant alphabet size, we can compress the arrays $PA$,$SA$,$PA^{-1}$ and $SA^{-1}$ with the following tradeoffs : 
\begin{itemize}
\item space $O(n\log\log n)$ bits with access time $t_{SA}=O(\log\log n)$ time. 
\item space $O(n\log^\epsilon n/\epsilon)$ bits with access time $t_{SA}=O(1/\epsilon)$ time, where $\epsilon$ is any constant
such that $0<\epsilon<1$. 
\end{itemize}
\end{lemma}

\subsubsection{Weak prefix search data structure}
A weak prefix search data structure built on a set of strings $U$ (sorted by increasing lexicographic order) permits, given a prefix $p$ of any element in $U$, to return the range of elements of $U$ prefixed by $p$. If given an element which is not prefix of any element in $U$, it returns an arbitrary range. We will use the following result:
\begin{lemma}~\cite{BBPV10}
\label{lemma:weak_pref_search}
Given a set of $n$ strings of fixed length $b$ each over the alphabet $[1..\sigma]$, we can build a weak prefix search data structure with the following time/space tradeoffs:
\begin{itemize}
\item Query time $t_W=O(c)$ with a data structure which uses $O(nc(b^{1/c}(\log b+\log\log\sigma)))$ bits of space\footnote{Actually the result in Belazzougui et al.~\cite{BBPV10} states a space usage $O(nb^{1/c}\log b)$ but assumes a constant alphabet size. However, it is easy to see that the same data structure just works for arbitrary $\sigma$ in which case it uses $O(n(b^{1/c}(\log b+\log\log\sigma)))$ bits of space.} for any integer constant $c>1$. 
\item Query time $t_W=O(\log b)$ with a data structure which uses $O(n(\log b+\log\log\sigma))$ bits of space. 
\end{itemize}
\end{lemma}
The weak prefix search data structure of the lemma above needs to use a perfect hash function $H$ and assumes that after preprocessing a query string $p$, the computation of $H(p[1,i])$ for any $i$ takes constant time. This is essential to achieve $t_W$ query time, since a query involves either $\Theta(c)$ or $\Theta(\log b)$ computations of the hash function on prefixes of $p$. 

In our case, the weak prefix search data structure will be built on $U$, the set of factors of $T'=T\#^{b-1}$ of fixed length $b$, where $\#$ is a special character lexicographically smaller than all the characters which appear in $T$. The only reason we use $T'$ instead of $T$ in the weak prefix search is to ensure that the last suffixes of length less than $b$ are all present in the weak prefix (a suffix $s$ of length $i<b$ will be stored in the weak prefix search as the string $s\#^{b-i}$).

\subsubsection{Hash function}
As described above we will use a polynomial hash function $H$~\cite{KR87} parameterized with a prime number $P$ and a seed $r$. The parameter $P$ is fixed but the seed $r$ is chosen randomly. Our goal is to build a hash function $H$ such that all the hash values of the substrings of $T$ used by the weak prefix search are all distinct (that is, $H$ is a perfect hash function for the considered set if substrings). For a randomly chosen $r$ this is the case with high probability. If it is not the case, we randomly choose a new $r$ and repeat the construction until all the needed substrings of $T$ are mapped to distinct hash values. 
We notice that the hash function $H$ can be evaluated on any string $s$ in deterministic linear time in the length of $s$. However the time to find a suitable $r$ is randomized only, as we may in the worst case do many trials before finding a suitable $r$ (although the number of trials is on average only $1+o(1)$ as each trial succeeds with high probability). 

\subsubsection{Prefix-sum data structure}
A succinct prefix-sum data structure is a data structure that permits the succinct encoding of an array $A[1..n]$ of integers of total sum $D$ in space $n(2+\lceil\log(D/n)\rceil)$ bits, so that the sum $\sum_{1\leq j\leq i}A[j]$ for any $i$ can be computed in constant time. This can be obtained by combining fast indexed bitvector implementations ~\cite{J89,Mu96,CM96} with Elias-Fano coding~\cite{EliESRCASF,FanNBRISM}. 
\subsection{Queries}
\label{section:const_fix_query}
\subsubsection{Preprocessing}
To make a query on our full-text index for a string $q$ of length $m$, we will proceed in a preprocessing step which takes $O(m)$ time. The preprocessing consists in the following phases: 
\begin{enumerate}
\item We fill two arrays $L[0..m]$ and $R[1..m+1]$ by using Lemma~\ref{lemma:compr_suf_array} on the string $q$. More precisely $L[i]$ stores the range of prefixes suffixed by $q[1..i]$ and $R[i]$ stores the range of suffixes prefixed by $q[i,m]$ (we naturally associate the range $L[0]=R[m+1]=[1,n]$ with the empty strings $q[1..0]$ and $q[m+1..m]$ which, are respectively suffix and prefix of any other string). This step takes time $O(m)$ as stated in Lemma~\ref{lemma:compr_suf_array}. 
\item We precompute an array which stores all the values of $r^{i}$ for all $0\leq i\leq m$.
\item We precompute all the values $H(q[1,i])$ for all $1\leq i\leq m$. That is all the hash values for all the prefixes of $q$. This can easily be done incrementally as we have $H(q[1,1])=q[1]\cdot r$ and then $H(q[1,i+1])=H(q[1,i])+q[i+1]\cdot r^{i+1}$ for all $1\leq i<m$. 
\item We precompute all the values $H(q[m-i+1,m])$ for all $1\leq i\leq m$. That is all the hash values for all the suffixes of $q$. This can also easily be done incrementally as we have $H(q[m,m])=q[m]\cdot r$ and then $H(q[i,m])=(H(q[i+1,m])+q[i])\cdot r$ for all $1\leq i<m$. 
\end{enumerate}

\subsubsection{Hash function computation}
We now describe some useful properties of the hash function $H$ which will be of interest for queries. 
An interesting property of the hash function $H$ is that after the precomputation phase, computing $H(p)$ for any $p$ at edit distance $1$ from $q$ takes constant time:
\begin{enumerate}
\item Deletion at position $i$: computing the hash value of $p=q[1,i-1]q[i+1,m]$ (note that $q[1,i]$ is defined as the empty string when $i=0$) is done by the formula $H(p)=H(q[1,i-1])+H(q[i+1,m])\cdot r^{i-1}$ (note that $H(q[1,i-1])=0$ if $i=1$). 
\item Substitution at position $i$: computing the hash value of $p=q[1,i-1]cq[i+1,m]$ is done by the formula $H(p)=H(q[1,i-1])+(c+H(q[i+1,m]))\cdot r^i$. 
\item Insertion after position $i$: computing the hash value of $p=q[1,i]cq[i+1,m]$~\footnote{Note that insertion before position $1$ is equivalent to insertion after position $0$ in which case $q[1,i]$ will be the empty string.} is done by the formula $H(p)=H(q[1,i])+(c+H(q[i+1,m]))\cdot r^{i+1}$.
\end{enumerate}
It can easily be seen that the computation of $H(p)$ takes constant time in each case. The reason is that the three values involved in each computation have all been obtained in the precomputation phase. 

Moreover, computing $H(p')$ for any prefix $p'$ of a string $p$ at edit distance $1$ from $q$ also takes constant time:
\begin{itemize}
\item The hash value for a prefix $p'$ of length $j$ of a string $p$ obtained by deletion at position $i$ in $q$ can be obtained by $H(p')=H(p)-H(q[j+2,m])\cdot r^j$ (note that $H(q[j+2,m]=0$ if $j+2>m$) whenever $j\geq i$ or $H(p')=H(q[1,j])$ otherwise. 
\item The hash value for a prefix $p'$ of length $j$ of a string $p$ obtained by substitution at position $i$ in $q$ can be obtained by $H(p')=H(p)-H(q[j+1,m])\cdot r^j$ (note that $H(q[j+1,m]=0$ if $j+1>m$) whenever $j\geq i$ or $H(p')=H(q[1,j])$ otherwise.
\item The hash value for a prefix $p'$ of length $j$ of a string $p$ obtained by insertion after position $i$ in $q$ can be obtained by $H(p')=H(p)-H(q[j,m])\cdot r^j$ (note that $H(q[j,m]=0$ if $j>m$) whenever $j>i$ or $H(p')=H(q[1,j])$ otherwise. 
\end{itemize}
\subsubsection{Checking occurrences}
Suppose we have found a potential occurrence of a matching substring of the text obtained by one deletion, one insertion or one substitution. There exists a standard way to check for the validity of the matching (this has been used several times before, for example in Chan et al.~\cite{CLSTW10}) using the arrays $PA^{-1}$ and $SA^{-1}$. Suppose we have found a potential occurrence of a string $p$ obtainable by deletion of the character at position $i$ in the query string $q$. In this case  we have $p=q[1,i-1]q[i+1,m]$. Moreover, suppose we have found for $p$ a potentially matching location $j$ in the text. Then checking whether this matching location is correct is a matter of just checking that $PA^{-1}[j+i-2]\in L[i-1]$ and, $SA^{-1}[j+i-1]\in R[i+1]$. That is, checking whether $T[j..j+m-2]=p=q[1,i-1]q[i+1,m]$ is just a matter of checking that $T[j..j+i-2]=q[1..i-1]$, and $T[j+i-1..j+m-2]=q[i+1,m]$ which amounts to checking that the prefix of $T$ ending at $T[j+i-2]$ is suffixed by $q[1,i-1]$, and the suffix of $T$ starting at $T[j+i-1]$ is prefixed by $q[i+1,m]$. Those two conditions are equivalent to checking that $PA^{-1}[j+i-2]\in L[i-1]$, and $SA^{-1}[j+i-1]\in R[i+1]$ respectively. Checking a matching location for an insertion or a substitution can be done similarly. For checking a matching at position $j$ in the text of a pattern obtainable by insertion of a character $c$ after position $i$, it suffices to check that $PA^{-1}[j+i-1]\in L[i]$, $T[j+i]=c$ and finally $SA^{-1}[j+i+1]\in R[i+1]$. Similarly checking for a matching of a string obtainable by a substitution at position $i$ can be done by checking that $PA^{-1}[j+i-2]\in L[i-1]$, $T[j+i-1]=c$ and finally $SA^{-1}[j+i]\in R[i+1]$. 
We thus have the following lemma which will be used as a central component for query implementation:
\begin{lemma}
\label{check_occ_lemma}
Given any pattern $q$ for which the arrays $L$ and $R$ have been precomputed, we can, for any string $p$ at distance $1$ from $q$ (where $p$ is described using $O(1)$ words that store the edit operation, the edit position and a character in case the operation is an insertion or a substitution) and a location $\ell$, check whether $p$ occurs at location $\ell$ in the text by doing a constant number of probes to the text and the arrays $SA^{-1}$, $PA^{-1}$  and thus in total time $O(t_{SA})$. 
\end{lemma}

\subsubsection{Query algorithm}
\label{section:const_query_algo1}
We now describe how queries for a given string $q$ of length $m$ are implemented. Recall that we are dealing with an alphabet of constant size $\sigma$. This means that the number of strings at distance $1$ from a given $q$ is $O(m\sigma)=O(m)$, because an edit operation is specified by a position and a character (except for deletions which are specified only by a position) and thus, we can have at most $O(m\sigma)$ different combinations. 

Our algorithm will simply check exhaustively for matching in the text of every string $p$ which can be obtained by one insertion, one deletion or one substitution in the string $q$. 
Each time we check for a string $p$ we also report the location of all occurrences in which it matches. The matching for a given string $p$ proceeds in the following way:
\begin{itemize}
\item Do a weak prefix search on $W_0$ for the string $p$ which takes either constant time or $O(\log b)$ time depending on the implementation used (see Lemma~\ref{lemma:weak_pref_search})~\footnote{Note that the query time bound uses the essential fact that the function $H$ can be computed on any prefix $p'$ of $p$ in constant time.}. The result of this weak prefix search is a range $[\ell_0,r_0]$ of elements in $U$ which are potentially prefixed by $p$. 

\item Using the prefix-sum data structure $V_0$, compute the range $[\ell_1,r_1]$ of suffixes of $T$ potentially prefixed by $p$. This range is given by $\ell_1=\sum_{1\leq t<l_0}V_0[t]$ and $r_1=\sum_{1\leq t\leq r_0}V_0[t]$ and its computation takes constant time. 
\item Do a lookup for $j_1=SA[\ell_1]$ in the suffix array. This takes either time $O(\log\log n)$ or $O(1)$ depending on the suffix array implementation (see Lemma~\ref{lemma:compr_suf_array}). 
\item Finally check that there is a match in the location $j_1$ in the text (this is done differently depending on whether we are dealing with an insertion, a substitution or a deletion). This checking, which is done with the help of Lemma~\ref{check_occ_lemma}, needs to do one access to $PA^{-1}$ and one access to $SA^{-1}$ and thus takes either $O(\log\log n)$ or $O(1)$ time depending on the implementation. If the match is correct we report the position $j_1$ and additionally report all the remaining matching locations which are at positions $SA[\ell_1+1],SA[\ell_1+2],,,SA[r_1-1],SA[r_1]$ (by querying the compressed suffix array using Lemma~\ref{lemma:compr_suf_array}). Otherwise we return an empty set. 
\end{itemize}
For proving the correctness of the query, we first prove the following lemma:
\begin{lemma}
Given a string $p$ of length at most $b$ such that $p$ is a prefix of at least one suffix of $T$, with the help of $W_0$ and $V_0$, we can find the interval of suffixes prefixed by $p$ in time  $O(t_{W_0})$. Further, given any $p$ we can check whether it prefixes some suffix of $T$ in time $O(t_{W_0}+t_{SA})$ and if not return an empty set. 
\end{lemma}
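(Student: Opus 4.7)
The plan is to reduce everything to a single call to $W_0$ followed, when necessary, by one witness check.

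For the first statement I will use the observation that if $p$ of length at most $b$ prefixes some suffix $T[i..n]$, then the length-$b$ block starting at position $i$ in $T' = T\#^{b-1}$ lies in $U$ and is prefixed by $p$ (the tail of $\#$'s does no harm because $\#$ is outside the original alphabet and lexicographically smaller than every real character). Hence $p$ is a legitimate input to $W_0$, so in time $O(t_{W_0})$ it returns the exact range $[l_0,r_0]$ of elements of $U$ prefixed by $p$. Since suffixes of $T$ prefixed by $p$ form a contiguous block in lexicographic order, their range in the suffix array is
\[
[l_1,r_1] \;=\; \left[\,1 + \sum_{t<l_0} V_0[t],\ \sum_{t\leq r_0} V_0[t]\,\right],
\]
and both endpoints are read off in $O(1)$ from the prefix-sum structure $V_0$. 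This yields the claimed $O(t_{W_0})$ bound.

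For the second statement I will run the same procedure on an arbitrary $p$ and then validate the result with a single witness. Concretely, if $l_1 > r_1$ I return the empty set immediately; otherwise I fetch $j = SA[l_1]$ in time $O(t_{SA})$ and invoke Lemma~\ref{check_occ_lemma} at position $j$ to test whether $p$ occurs there, which again costs $O(t_{SA})$. If the test succeeds, then $p$ prefixes the suffix $T[j..n]$, so by the first statement $[l_1,r_1]$ is the correct answer. If the test fails, then $p$ cannot prefix any suffix of $T$: a valid $p$ would (by the first statement) force $W_0$ to return the correct range, and the suffix starting at $SA[l_1]$ would then necessarily start with $p$, contradicting the failure. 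So in this case I safely return the empty set. The total cost is $O(t_{W_0}+t_{SA})$.

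The main delicate point is justifying why a single witness check at $SA[l_1]$ suffices in the second statement, given that $W_0$ may produce an arbitrary range on non-prefixes. The key is the biconditional hidden in the first statement: $p$ prefixes a suffix of $T$ if and only if $p$ prefixes some element of $U$, and precisely in that case $W_0$ is accurate. Thus a successful occurrence check certifies both membership and the correctness of the returned interval, while a failure rules out any suffix of $T$ starting with $p$. I note that Lemma~\ref{check_occ_lemma} requires $p$ to be presented as an $O(1)$-word edit of the precomputed query $q$; this is exactly the form in which the query algorithm of Section~\ref{section:const_fix_query} will invoke the present lemma, so the assumption is satisfied.
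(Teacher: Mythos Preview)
Your proof is correct and follows essentially the same route as the paper: show that $p$ prefixing some suffix of $T$ implies $p$ prefixes an element of $U$, so $W_0$ returns the true range $[l_0,r_0]$, then translate via $V_0$ to the suffix-array range; for the second assertion, use a single witness lookup in $SA$ together with Lemma~\ref{check_occ_lemma}. The only cosmetic differences are that the paper checks the witness at $SA[r_1]$ rather than $SA[l_1]$ and spells out in slightly more detail why the partial sums of $V_0$ count exactly the lexicographically smaller suffixes; conversely, your explicit remark that Lemma~\ref{check_occ_lemma} is applicable because $p$ is always supplied as an $O(1)$-word edit of $q$ is a useful clarification the paper leaves implicit.
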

\begin{proof}
We start with the first assertion. If $p$ is prefix of some suffix $s\in T$ then it will also be prefix of some element in $U$. This is trivially the case if $|s|\geq b$ and this is also the case if $|s|<b$ as we are storing in $U$ the string $s\#^{b-|s|}$ which is necessarily prefixed by $p$ as well. Now we prove that the returned interval $[l_1,r_1]$ is the right interval of suffixes prefixed by $p$. First notice that the weak prefix search by definition returns the interval $[l_0,r_0]$ of elements of $U$ which are prefixed by $p$. Now, we can easily prove that $l_1=\sum_{1\leq t<l_0}V_0[t]$ is exactly the number of suffixes that are lexicographically smaller than $p$. This is the case as we know that the sum $\sum_{1\leq t<l_0}V_0[t]$ includes exactly the following:
\begin{enumerate}
\item All suffixes of length less than $b$ which are lexicographically smaller than $p$ and for which an artificial element was inserted in $U$. 
\item All the suffixes of length at least $b$ whose prefixes of length $b$ are lexicographically smaller than $p$. 
\end{enumerate}
On the other hand we can prove that $r_1-l_1$ gives exactly the number of suffixes prefixed by $p$. That is $r_1-l_1+1=\sum_{l_0\leq t\leq r_0}V_0[t]$ which gives the number of suffixes of length at least $b$ prefixed by elements of $U$ prefixed by $p$ in addition to the suffixes of length less than $b$ prefixed by $p$ and for which a corresponding artificial element has been stored in $U$. Now that the first assertion of the lemma has been proved we turn our attention to the second assertion. This is immediate: given a string $p$ which does not prefix any suffix, we know by Lemma~\ref{check_occ_lemma} that the checking will fail for any suffix of $T$ and thus fail for the suffix at position $j$ in the last step which thus returns an empty set. 
\qed
\end{proof}

The following lemma summarizes the query for a prefix $p$ at distance one from $q$: 
\begin{lemma}
\label{modif_pattern_query}
Given any pattern $q$ for which the arrays $L$ and $R$ have been computed, we can for any string $p$ at edit distance $1$ from $q$ (where $p$ is described using $O(1)$ words that store the edit operation, the edit position and a character in case the operation is an insertion or a substitution) search for all the $occ$ suffixes prefixed by $p$ in time $O(t_{W_0}+(occ+1)t_{SA})$. 
\end{lemma}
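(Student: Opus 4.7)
The plan is to assemble the three ingredients already set up: constant-time evaluation of $H$ on prefixes of $p$, the weak prefix search $W_0$ plus prefix-sum $V_0$ (which together implement the preceding lemma), and the $O(t_{SA})$ occurrence checker of Lemma~\ref{check_occ_lemma}. Since $p$ is obtained from $q$ by a single edit and the arrays of hash values $H(q[1,i])$ and $H(q[i,m])$ have already been tabulated, the formulas in the ``Hash function computation'' paragraph give $H(p')$ in $O(1)$ for every prefix $p'$ of $p$. Hence the hypothesis needed by the weak prefix search data structure $W_0$ is satisfied without any per-prefix overhead.

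First I would call the weak prefix search on $W_0$ with the string $p$, which runs in $O(t_{W_0})$ since each hash value it needs can be served in $O(1)$. This yields a range $[l_0,r_0]$ in $U$. Two $O(1)$ prefix-sum queries on $V_0$ then produce $[l_1,r_1]$, the candidate suffix-array interval; by the preceding lemma, if $p$ actually prefixes some suffix of $T$ then this interval is exactly the set of suffixes prefixed by $p$, and otherwise the interval is arbitrary.

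Next I would validate the interval. I access $j=SA[r_1]$ at cost $t_{SA}$ and invoke Lemma~\ref{check_occ_lemma} to test, in $O(t_{SA})$ time, whether the edited pattern $p$ actually occurs at position $j$. If the test fails, by the preceding lemma $p$ prefixes no suffix of $T$ and I can safely return the empty set; the total cost so far is $O(t_{W_0}+t_{SA})$. If the test succeeds, then $[l_1,r_1]$ is genuinely the interval of suffixes prefixed by $p$, so $occ=r_1-l_1+1$, and I report all occurrences by reading $SA[l_1],SA[l_1+1],\dots,SA[r_1]$, each access costing $t_{SA}$, for a total of $O(occ\cdot t_{SA})$.

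Summing the three phases gives the claimed $O(t_{W_0}+(occ+1)\,t_{SA})$ bound. There is no serious obstacle here because the preceding lemma already proves that the candidate interval is either correct or certifiably bad via a single $O(t_{SA})$ check; the only thing to be careful about is charging the single validating $SA$-access to the ``$+1$'' term so that queries with $occ=0$ still run in $O(t_{W_0}+t_{SA})$, which the stated bound accommodates.
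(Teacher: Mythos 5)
Your proof is correct and follows essentially the same route as the paper: query $W_0$, convert the interval with $V_0$, validate a single candidate position via Lemma~\ref{check_occ_lemma}, and then either abort or enumerate $SA[l_1..r_1]$; you also correctly attribute the validating access to the ``$+1$'' term. The only addition beyond the paper's proof is that you explicitly invoke the constant-time evaluation of $H$ on prefixes of $p$ to justify the $O(t_{W_0})$ cost of the weak prefix search, which the paper leaves implicit since it was established in the preceding paragraphs.
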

\begin{proof}
We first prove the correctness of the operations as described above and then prove the time bound. For that we examine the two possibilities: 
\begin{itemize}
\item $p$ is not prefix of any suffix in $T$, in which case the query should return an empty set. It is easy to prove the equivalent implication: if the data structure returns a non empty set, then there exists at least some suffix of $T$ prefixed by $p$. For the data structure to return a non empty set, the checking using Lemma~\ref{check_occ_lemma} must return true for the location $j$ in the text and for this checking to return true, $q$ must be prefix of the suffix starting at position $j$ in the text. 
\item $p$ is prefix of some suffix of $T$, in which case the query must return all those suffixes. Note that by definition the weak prefix search $W_0$ will return the right range of elements of $U$ which are prefixed by $p$. 
The justification for this is that a single match implies that there exists at least one suffix prefixed by $p$, which implies the weak prefix search $W_0$ returns the correct range of factors in $U$ and thus the bit-vector $V_0$ returns the correct range of suffixes $SA[l_1]$,$SA[l_1+1],\ldots,SA[r_1-1],SA[r_1]$ which must also have $p$ as a prefix. 
\end{itemize}
We now prove the time bound. In the case that $p$ is not prefix of any suffix in $T$, the query time is clearly $O(t_{W_0}+t_{SA})$ as we are doing one query on $W_0$ (which takes $O(t_{W_0})$ time), one query of the prefix-sum data structure $V_0$ in constant time and finally the checking using Lemma~\ref{check_occ_lemma} which takes $O(t_{SA})$ time. In case $p$ is prefix of some suffixes, then the first step for checking that the set is non empty also takes constant time and reporting each occurrence takes additional $t_{SA}$ time per occurrence. 
\qed
\end{proof}

\subsection{Duplicated occurrences}
\label{subsec:dupl_occ}
A final, important detail is to avoid reporting the same occurrence more than once. This can happen in a few cases if insufficient care is taken in the query algorithm of Section~\ref{section:const_query_algo1}. As an example, take the case of the string $q=a^mb^m$; deleting one character at any of the $m$ first positions of $q$ would result in the same string $a^{m-1}b^m$. The same problem occurs if we delete one of the last $m$ characters of $q$, resulting in the same string $a^mb^{m-1}$. 
In order to avoid needing to check the same string more than once we can use the following simple rules: 
\begin{enumerate}
\item For deletions, we avoid testing for the deletion of a character $q[i]$ if $q[i-1]=q[i]$ as this has the same effect as deleting character $q[i-1]$. 
\item For substitutions we avoid testing for the replacement of the character $q[i]$ by the same $q[i]$, as this would result in the same string $q$. 
\item For insertions, we avoid inserting the character $c=q[i]$ between positions $i$ and $i+1$ as it would have the same effect as inserting character $c$ between positions $i-1$ and $i$. 
\end{enumerate}
It is easy to check that the three rules above will avoid generating the same candidate strings for each of the three edit operations~\footnote{This is evident for substitutions. It is also true for deletion, since only the last character in a run of equal characters is deleted. For insertions, the only problematic case is when inserting the same character in a run of equal characters of length at least $1$, and this case is  avoided since we only insert such a character at the end of the run.}. Note however that a string obtained by two different operations could still generate the same locations in the text. As an example the string $q=a^mb^m$ can generate the string $p_1=a^{m-1}b^m$ by a delete operation and the string $p_2=a^{m-1}b^{m+1}$ by a substitution. Note that all occurrences of $p_2$ are also occurrences of $p_1$. 
There exist a few different ways to fix this problem. The first is to just not do anything if the user can tolerate that an occurrence is reported a constant number of times. Indeed each occurrence can only be reported at most three times (once for each edit operation) as all occurrences generated by each edit operation must be distinct. 

If we require that each occurrence is only reported once, then we can just use a bitvector of $n$ bits to track the already reported occurrences. Before starting to answer to queries, we initially set all the positions in the bitvector to $0$. Then when executing a query we only report an occurrence if the corresponding position in the bitvector is unmarked and then mark the position. Finally after the query has finished, we reset all the positions that were marked during the query so that the bits in the bitvector are again all zero before the next query is executed. 

\section{Solution for Large Alphabets}
\label{section:large_alpha}
The time bound of Theorem~\ref{constant_fix_theorem} has a linear dependence on the alphabet size as the query time is actually $O(\sigma m+occ)$. This query time is not reasonable in the case where $\sigma$ is non constant. In this section we show a solution which has no dependence on alphabet size. In order to achieve this solution we combine the result of Chan et al.~\cite{chan2011linear} with an improved version of Chan et al.~\cite{CLSTW10} that works only for query strings of length bounded by a parameter $b$. This is shown in Section~\ref{section:full_arb_solution}. We then show how we improve the  solution in Chan et al.~\cite{CLSTW10} in Sections~\ref{section:arb_high_level_desc},~\ref{section:DS_impl} and~\ref{section:gen_weight_decomp}. 
This improvement is an extension of the data structures described in the previous section, but with two major differences: it does not use compressed variants of the text indexing data structures and its query time is independent of the alphabet size. Before describing the details of the used data structures, we recall in Section~\ref{section:arb_tools} a few definitions and data structures which will be used in our construction. 

\subsection{Solution for Arbitrary Pattern Length}
\label{section:full_arb_solution}
In order to prove Theorem~\ref{full_arb_theorem} we will make use of the following lemma from Chan et al.~\cite{chan2011linear}: 
\begin{lemma}~\cite[Section 2.3,Theorem 6]{chan2011linear}
\label{CLSTW06b1_fix_theorem}
For any text $T$ of length $n$ characters over an alphabet of size $\sigma\leq n$ we can build an index of size $O(n\log n)$ bits so that given any pattern $p$ of length $m\geq \log^3 n\log\log n$ we can report all of the $occ$ substrings of the text which are at edit distance $1$ from $p$ in time $O(m+occ)$. 
\end{lemma}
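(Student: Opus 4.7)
The plan is to reproduce the construction of~\cite{CLSTW06b}, since the lemma is quoted verbatim from there. First I would build a standard suffix tree $S$ on $T$, storing it together with the inverse suffix array $SA^{-1}$ and the inverse prefix array $PA^{-1}$; these together take $O(n\log n)$ bits and enable the constant-time candidate verification of lemma~\ref{check_occ_lemma}. On top of $S$, I would build a centroid-path decomposition, and at each centroid path store the auxiliary information (branching characters together with per-node suffix ranges) that is needed to support fast continuation of an exact match after a single simulated edit operation, in the style of the correction trees described in section~\ref{section:arb_high_level_desc}.

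To answer a query on $q$, I would first descend $S$ following $q$; this already produces the exact matches in $O(m+occ)$ time. For matches at edit distance exactly $1$, the algorithm would walk $q$ down $S$ and, at each point where it enters a new centroid path, branch into three attempts corresponding to substitution, insertion, and deletion of a single pattern character, resuming an exact match of the modified $q$ in the appropriate subtree via suffix links and the stored per-path information. Each reported occurrence arises from exactly one such branch, so the total reporting cost sums to $O(occ)$, and the verifications of candidate positions are handled in constant time using $SA^{-1}$ and $PA^{-1}$.

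The main obstacle is to bound the additive polylogarithmic overhead: a naive implementation that retries the match at every possible pivot position costs $\Omega(m^2)$. The crucial observation to exploit is that any root-to-leaf path in $S$ meets only $O(\log n)$ centroid paths, so the walk of $q$ together with its three error-branches visits only $O(\log n)$ centroid paths in total; with the auxiliary structures of~\cite{CLSTW06b} each of these contributes at most $O(\log^2 n\log\log n)$ overhead, giving a total additive cost of $O(\log^3 n\log\log n)$. The threshold $m\geq\log^3 n\log\log n$ in the statement is chosen precisely so that this polylogarithmic overhead is absorbed into $O(m)$, leaving the total query time at the desired $O(m+occ)$ within $O(n\log n)$ bits of space.
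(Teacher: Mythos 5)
The paper does not prove this lemma at all: it is imported verbatim as a citation to~\cite[section~2.3, Theorem~1]{CLSTW06b} and used as a black box, exactly like lemma~\ref{CLSTW06b0_fix_theorem}. So there is no internal proof to compare against; the ``proof'' in the paper consists of the bibliographic reference and nothing more.

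Your sketch is a reasonable high-level paraphrase of the Cole--Gottlieb--Lewenstein style construction that~\cite{CLSTW06b} builds on, but it does not work as a self-contained proof, for two reasons. First, the load-bearing quantitative step --- that each of the $O(\log n)$ centroid paths contributes only $O(\log^2 n\log\log n)$ additive overhead --- is justified by appealing to ``the auxiliary structures of~\cite{CLSTW06b},'' i.e.\ by citing the very theorem being proved; you have pushed the entire content of the lemma into that one sentence. Second, the claim that ``each reported occurrence arises from exactly one such branch'' is stated as obvious, but avoiding double reporting across the substitution/insertion/deletion branches and across different pivot positions is precisely what requires the careful type-1 versus type-2 tree distinction that the paper spells out in section~\ref{section:arb_high_level_desc}; it is not automatic. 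If the intent is merely to document what the cited theorem does, your summary is serviceable; if the intent is to prove the lemma, you would need to open the black box and actually derive the per-path cost and the disjointness of the occurrence sets.
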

In order to get Theorem~\ref{full_arb_theorem}, we combine this lemma with Theorem~\ref{arb_fix_theorem}. 
The combination is also straightforward. That is, we build both indices, where the index of Theorem~\ref{arb_fix_theorem} is built using the parameter $b=\log^3 n\log\log n$. Then if given a pattern of short length $m<\log^3 n\log\log n$, we use the index of Theorem~\ref{arb_fix_theorem} to answer in time $O((m+occ)/\epsilon)$, otherwise given a pattern of length $m\geq \log^3 n\log\log n$ we use the index of Lemma~\ref{CLSTW06b1_fix_theorem} to answer in time $O(m+\log^3 n\log\log n+occ)=O(m+occ)$. The space is dominated by our index, which uses $O(n\log n(b^\epsilon+\log^\epsilon n)/\epsilon)=O(n\log^{1+\epsilon}n/\epsilon)$ (by adjusting $\epsilon$).
\subsection{Tools}
\label{section:arb_tools}
In order to prove Theorem~\ref{arb_fix_theorem}, we will make use of the following additional tools:
\subsubsection{Heavy-light tree decomposition}
The heavy-light decomposition of a tree~\cite{HT84} decomposes the edges that connect a node to its children into two categories: heavy and light edges. Suppose that a given node $x$ in the tree has $w_x$ leaves in its subtree. We call $w_x$ the weight of $x$. We let $y$ be the heaviest child of $x$ ($y$ is the child of $x$ with the greatest number of leaves in its subtree). If there is more than one child sharing the heaviest weight, then we choose $y$ arbitrarily among them. Then $y$ is considered  a \emph{heavy} node and all the other children of $x$ are considered light nodes. The edge connecting $x$ to $y$ is considered a \emph{heavy} edge and all the other edges that connect $x$ to its children are considered light edges. A heavy-path is a maximal sequence of consecutive heavy edges. We will make use of the essential property summarized by  the following lemma:
\begin{lemma}
Any root to leaf path in a tree with $n$ nodes contains at most $\log n$ light edges.
\end{lemma}
Thus for a given pattern of length $m$ the traversal of a suffix tree decomposed according to a heavy path decomposition will contain at most $t=min(m,\log n)$ light edges. 
An example of a heavy-light decomposition of a suffix tree is illustrated in Figure~\ref{pic:heavy_path_suf_tree} in which the light edges are represented by dashed lines and the node weights are stored inside the nodes. 
\begin{figure}[htb] 
\centering\includegraphics[scale=1]{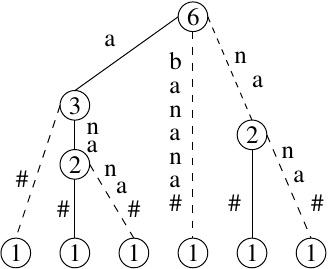} 
\caption{Heavy-light decomposition of a suffix tree for the string $banana$} 
\label{pic:heavy_path_suf_tree} 
\end{figure}

\subsubsection{1D colored range reporting data structure}
A 1D colored range reporting data structure solves the following problem: given an array $A[1..n]$ of colors each chosen from the same alphabet of size $\sigma$, answer to the following query: given an interval $[i,j]$ return all the $occ$ distinct colors which occur in the array elements $A[i],A[i+1]...A[j]$. A solution devised by Muthukrishnan~\cite{Mu02} uses $O(n\log n)$ bits of space and allows queries to be answered in optimal $O(occ)$ time. Later Sadakane~\cite{Sa07} improved the solution so as to use only $4n+o(n)+\sigma$ bits~\footnote{The term $4n+o(n)$ comes from the use of a succinct index for range minimum queries (RMQ). The term can be reduced to $2n+o(n)$ if the recent optimal solution of Fischer and Heun~\cite{Fi10} is used. The term $\sigma$ comes from the use of a bitvector of $\sigma$ bits that needs to be writable. The bit-vector is used to avoid reporting a single color more than once.} on top of the array $A$, reducing the space usage to $n\log\sigma+4n+\sigma+o(n)$ bits. 
\begin{lemma}~\cite{Sa07}
~\label{lemma:1D_color_rep}
Given an array $A[1..n]$ of colors from the set $\{1,2,,,\sigma\}$, we can build a data structure of size $O(n\log\sigma)$ so that given any range $[i..j]$ we can return all the $occ$ distinct colors which appear in $A[i]...A[j]$ in time $O(occ)$. Moreover the colors can be returned one by one in $O(1)$ time per color.
\end{lemma}

\subsection{High Level Description}
\label{section:arb_high_level_desc}

The solution of Theorem~\ref{constant_fix_theorem} has a too strong alphabet dependence in its query time. In particular, testing all insertion and substitution candidate strings takes $O(m\sigma)$ time as we have $O(m\sigma)$ candidates and spend $O(1)$ time for testing each candidate. By contrast, deletion candidates are at most $m$ and thus can be tested in $O(m)$ time. 
In order to reduce the number of candidates for insertions and substitutions we could use substitution lists as described in the solution of Belazzougui~\cite{B09}, which deals with approximate queries with one error over a dictionary of strings. 
A substitution list associates a list of characters $c_1,c_2\ldots$  to each pair of strings $(s_0,s_1)$ such that the string $s=s_0c_is_1$ is in the dictionary. When querying for substitutions on a pattern $q$ of length $m$, we can get the characters to substitute at position $i$ by querying the substitution list for the pair $(p[1..i-1],p[i+1,m])$. This will in effect return every character $c$ such that the string $s=p[1..i-1]cp[i+1,m]$ is in the dictionary. 

Concerning the space usage, the total space used by all the substitution lists is $O(n\log\sigma)$, where $n$ is the sum of lengths of the strings in the dictionary. This is because every string of length $t$ in the dictionary  contributes exactly $t$ entries to the substitution lists and the substitution lists are implemented in a succinct way that allows them to use $\log\sigma+O(1)$ bits per entry. 

We can now try to apply the idea of substitution lists to our problem. In our case we are indexing suffixes of a text of total length $O(n^2)$. A naive implementation of the substitution lists would use in total $O(n^2\log\sigma)$ bits of space. 
However it turns out that we only need to index strings of length at most $b=\operatorname{poly}(\log n)$ since we can use Lemma~\ref{CLSTW06b1_fix_theorem} to answer queries of length at least $b=\log^3 n\log\log n$ in time $O(m+occ)$. 

Now we could naively implement the idea by indexing all the factors of the text of length at most $b$. Because we have $O(nb)$ such factors with each of them being of length $b$, this would result in $O(nb^2\log\sigma)$ bits of space in total.

In order to further reduce the space, we will only store substitution lists for factors of length exactly $b$ rather than length at most $b$. This will reduce the space to $O(nb\log\sigma)$ bits. We will make use of a weak prefix search data structure that will allow us to get the contiguous range of substitution lists that correspond to a given prefix. The substitution list corresponding to that prefix will be obtained by merging all the substitution lists that are in the given range. 
However there could be duplication in the characters stored in each substitution list (the same character could be stored in multiple lists). In order to ensure that each character is only reported once we use Lemma~\ref{lemma:1D_color_rep}. 

We can further reduce the space of the substitution lists from $O(nb\log\sigma)$ to $O(n\log n\log\sigma)$ by making use of an observation that was made in Cole et al.~\cite{CGL04} and that was further used in Chan et al.~\cite{CLSTW10}. 

The main idea of Cole et al.~\cite{CGL04} was to build correction trees (which are of three types, deletion, substitution and insertion). The trees will store $O(n\log n)$ elements in total. The $\log n$ factor comes from the fact that each suffix may be stored up to $\log n$ times in the correction trees of a given type. More precisely a suffix is duplicated $O(t)$ times if the path from the root to that suffix has $t\leq \log n$ light edges. We will not make use of correction trees but will make essential use of the following related crucial observation: for each indexed factor we will need only to store $t\leq\log n$ elements in the substitution list. More precisely we will need only to store the characters that are labels of light edges. The reason is that a label of a heavy edge is already known when traversing the suffix tree and thus, need not be stored in a substitution list. Any suffix tree node has always one outgoing heavy edge and we can just consider its labeling character as a potential candidate for substitution or insertion. 

Finally we reduce the space further from $O(n\log n\log\sigma)$ to $O(n\log^{1+\epsilon}n)$ by using a more fine-grained encoding of the substitution lists. For that we will use a more precise categorization of the light edges according to their weight, and store for each node a list of all edges in each category. Then encoding a substitution list character can be done by using an index into the category of the edge labeled by that character. The crucial observation that allows for the use of less space lies in the fact that there cannot be too many heavier edges and, thus, an index into a category that contains heavier edges will use less space. 

\subsection{Data Structures for short patterns}
\label{section:DS_impl}

We will now present our solution that will be used to answer queries for patterns whose length is bounded by a certain parameter $b$. The implementation of our solution will consist of two parts. The first part reuses the same data structures used for Theorem~\ref{constant_fix_theorem} (see Section~\ref{section:const_fix_data_struct}), but with slight modifications. Our second part is an efficient implementation of substitution lists. We now describe our first part. We make only two modifications to the data structures of Section~\ref{section:const_fix_data_struct}: 
\begin{enumerate}
\item We use non compressed variants of the text indexing data structures. 
\item We mark the heavy edge of every suffix tree node. 
\end{enumerate}

The second part of our solution consists in a set of $b$ substitution list stores, where the substitution store number $i$ will contain all substitution lists for position $i$. 
Our substitution lists are built on the factors of length $b$ of the text (from now on we call them b-factors of the text). We thus have a dictionary of $n$ strings of length $b$ each. 
We start by building a compacted trie on those b-factors. This compacted trie is built by cutting the suffix tree below level $b$ (removing all nodes that have depth above $b$ and transforming internal nodes that span depth $b$ into leaves). 

We now give more detail on how the substitution stores are implemented. Recall that a substitution list as described in~\cite{B09} associates a list of characters $c_1,c_2\ldots$  to each pair of strings $(s_0,s_1)$ such that each string $s=s_0c_is_1$ is in the dictionary. In our case, the substitution lists are built by doing a top-down traversal of the compacted trie. Then each time we encounter a node associated with a factor $s_0$ and having a light edge labeled with character $c$, we do the following: we traverse all the leaves below that light edge (leaves of the subtree rooted at the target of that edge) and for each leaf associated with the string $s_0cs_1$, we append $c$ at the end of the substitution list corresponding to the pair $(s_0,s_1)$. 
We thus avoid storing a character associated with a b-factor when it is :
\begin{enumerate}
\item Part of a compacted path of an edge. 
\item A label of a heavy edge. 
\end{enumerate}
The order of the characters in a substitution list is arbitrary. We will have $b$ substitution stores, one per position $d\in[1..b]$. 
A substitution store for position $d$ will contain all substitution lists associated with pairs $(s_0,s_1)$ such that $|s_0|=d-1$. We first sort all those substitution lists by $s_0s_1$ (the concatenation of the two strings of their associated pairs). 
We let those sorted substitution lists be $L_1,L_2\ldots$. We store them in a compact, consecutive and packed way into an array $A_d=L_0L_1\ldots$. Note that this array uses space at most $n\log\sigma$ bits. This is because we have at most $n$ b-factors and we are storing at most one character from each b-factor (the character at position $d$). However when summed up over all the arrays, the total space (summing up $(|A_1|+\cdots+|A_b|)\log\sigma)$ will be $O(n\log n\log\sigma)$ and not $O(nb\log\sigma)$.
This is because each b-factor, i.e., root-to-leaf path in the trie, can contribute at most $\log n$ light edges. 

We now state how we implement a substitution store for position $d$. We will make use of the following data structures:
\begin{enumerate}
\item A constant-time weak prefix search on the set of strings $S_d$ formed by the concatenation of pairs of strings associated with substitution lists. That is for each pair $(s_0,s_1)$ we insert the string $s_0s_1$ in $S_d$. This weak prefix search will be able to return in constant-time the range of elements $s_0s_1$ prefixed by some string $p$ (the constant query time supposes some suitable preprocessing whose time is not taken into account). The space used by the weak prefix search data structure will be $O(nb^\epsilon)$ for any desired constant $0<\epsilon<1$. 
\item A 1D colored range reporting data structure implemented using Lemma~\ref{lemma:1D_color_rep} on top of the array $A_d$. This will allow to report the $occ$ distinct characters in any given range $A_d[i..j]$ in time $O(occ)$. The space used will be $O(|A_d|\log\sigma)$ bits. 
\item A prefix-sum data structure that stores the size of each of the substitution lists in sorted order. The space used will be $O(|A_d|)$ bits. In other words, given any $i$, the prefix-sum data structure will be able to return $\sum_{1\leq j\leq i}|L_i|$. 
\end{enumerate}

By extension, given a pair $(s_0,s_2)$ we will associate a substitution list formed by the union of the substitution lists associated with all pairs $(s_0,s_1)$ such that $s_1$ is prefixed by $s_2$. We call a substitution list $(s_0,s_2)$ a \emph{virtual} substitution list as such a list does not exist in the store, but is in reality obtained by the union of existing  substitution lists. The role of the weak prefix search and the 1D colored range reporting data structures is precisely to allow efficient computation of the virtual list $(s_0,s_2)$ as the union of all lists $(s_0,s_1)$. 

The $b$ weak prefix search data structures will be implemented using the same global hash function. 
\subsubsection{Query algorithm}

We now show how queries are implemented. We will concentrate on queries for substitution candidates. As pointed out before, queries for deletion candidates are handled in exactly the same way described in Section~\ref{section:const_query_algo1}. This is possible because we have $m$ candidates and we can exhaustively check each candidate within the same time bounds as in the constant alphabet size. Querying for insertion candidates is only slightly different from querying for substitution candidates and is omitted. 

Queries for substitution candidates are implemented in the following way: given a query string $q$ of length $m$, we traverse the trie top-down and find the locus of $q$. The locus of a string $q$ is defined by a pair $(x,\ell)$. We let $y$ denote the lowest node $y$ in the trie whose associated factor $p_y$ is prefix of $q$ (it is easy to see that such a node is unique). Then the locus $(x,\ell)$ of $q$ is defined in the following way: 
\begin{enumerate}
\item If $y$ has a child labeled with character $c=q[|p_y|+1]$, then $x$ will be the child of $y$ labeled with $c$ and $\ell$ will be the length of the longest common prefix between $q$ and $p_x$ where $p_x$ is the factor associated with $x$ (that is $p_x$ differs with $q$ on character number $\ell+1$, but agrees on characters $q[1..\ell]$). We call this a type $1$ locus. 
\item Otherwise ($y$ has no child labeled with character $c=q[|p_y|+1]$). Then $x=y$ and the locus will be $(y,|p_y|)$. We call this a type $2$ locus. 
\end{enumerate}
As we traverse the nodes from the root down to the node $x$, we identify all the nodes with their associated factors. Assume that the nodes are (in top-down order) $x_1,x_2,\ldots,x_t,x$ and their associated factors are $p_1,p_2,\ldots,p_t,p_x$ (note that the factors are of increasing length). Note that every factor $p_i$ is prefix of $q$ except $p_x$ which could potentially not be prefix of $q$. 

Notice that because of the suffix tree properties, we know that candidate positions for substitutions can only be $|p_1|+1,|p_2|+1,\ldots,|p_t|+1$ to which we add position $\ell+1$ in case $\ell<m$. To see why, notice that every prefix $p_r$ of $q$ of length between $|p_i|$ and $|p_{i+1}|$ (or of length less than $|p_1|$ or  between $|p_t|$ and $|p_x|$) is not associated with any suffix tree node, but still appears as a factor in the text. This means that there is a unique character $c$ such that $p_rc$ appears in the text (if there were two such characters then there would have been a suffix tree node associated with $p_r$). However $p_rc$ is also prefix of $q$ and thus agrees on character $c$ with $q$. Thus we have no character that we can substitute for character $c$ in $q$ and yet obtain a factor appearing in the text. 

In summary, the substitution candidate characters will be :
\begin{enumerate}
\item For every node $z=x_i$ with $i\in[1..t]$ (and also for the node $z=x$ when the locus is of type $2$) we have the following candidates:
\begin{enumerate}
\item The character that labels a heavy child of $z$ at position $|p_z|+1$ (recall that $p_z$ is the factor associated with $z$) if this character is different from $c=q[|p_z|+1]$. 
\item The characters obtained from the substitution list corresponding to the pair $(p_i,q[|p_i|+2..m])$ stored in the substitution store number $|p_z|+1$ (we naturally avoid querying for the character $c=q[|p_z|+1]$ if it is contained in the substitution list). We only test one of those characters by substituting it at position $|p_z|+1$ and, if we have a match, then we can report all the remaining characters in the list (except character $q[|p_z|+1]$ if it appears in the list). Details on how the substitution store is queried are below. 
\end{enumerate}
\item In case of a type $1$ locus and $\ell<m$, we also try to substitute the character $p_x[\ell+1]$ at position $\ell+1$. 
\end{enumerate}

We now give more details on how the substitution store is queried. Recall that a substitution store number $d$ stores the substitution lists associated with the pairs $(s_0,s_1)$ such that $|s_0|=d-1$ and $|s_0|+1+|s_1|=b$ and moreover there exists at least one character $c$ such that $s_0cs_1$ is a b-factor of the text (the substitution list contains at least the character $c$). 

Thus in case we have $m=b$, we can directly query the relevant substitution list. 
Otherwise, we first query the weak prefix search data structure which returns a range of substitution lists, then use the prefix-sum data structure to convert this range into a range in the array $A_d$.

We finally use the 1D-colored rangeed reporting data structure to report all the distinct colors in the range. We need to use the special colored range reporting data structure because the colors may be duplicated in different substitution lists. We only test for the first character (the first reported color), and if it gives a match, then we go on and substitute the following characters (the following colors obtained from the colored rangeed range reporting data structure). 


In the next subsection, we describe how to reduce the space used per substitution list element from $O((\log\sigma+\log^\epsilon n/\epsilon))$ to just $O(\log^\epsilon n/\epsilon)$, hence removing any dependence on $\sigma$.

\subsection{Space reduction using generalized weight decomposition}
\label{section:gen_weight_decomp}
In order to reduce the space we will use a more fine-grained categorization of the nodes. Instead of having only two types of nodes (heavy and light), we will instead define multiple types of nodes according to their weight. We first define the generalized weight decomposition, then prove an important lemma on the decomposition and finally show how to use it to reduce the space usage of our index. 
\subsubsection{Generalized weight decomposition}
We order the children of a node $x$ by their weights (from the heaviest to the lightest node). We say that a child $u'$ has rank $i$ if it is the $i$th heaviest child of its parent. By extension we say that the edge connecting $x$ to $y$ is of rank $i$. 
We have the following lemma:
\begin{lemma}
\label{lemma:heav_tree_lemma}
Any root-to-leaf path in a tree with $n$ nodes contains at most $\log_i n$ light edges of rank at least $i$.
\end{lemma}
\begin{proof}
The proof is easy. We first prove that traversing an edge of rank $i$ reduces the weight of the current node by a factor at least $i$. This is easy to see by contradiction. Suppose $x$ has weight $w_x$ and its child $y$ of rank $i$ has weight $w_y>(w_x/i)$. Then necessarily all the $i$ children of rank at most $i$ are at least as heavy as $y$ and thus have necessarily weight at least $w_y$ each. 
Thus the total weight of the first $i$ children of $x$ is at least $iw_y>w_x$ which is more than the weight of $x$. 
We thus have proved that traversing an edge of rank $i$ reduces the weight of the current node by a factor at least $i$. 
From there we can easily prove our lemma. We start the traversal at the root which has weight $n$ and each time we traverse a child of rank at least $i$ we divide the weight by a factor at least $i$. After traversing $\log_i n$ nodes we reach a node with weight $\frac{n}{\log_in}=1$ which by definition can not contain any light edge in its subtree. Thus we reach a leaf after traversing at most $\log_i n$ nodes of rank at least $i$. 
\qed
\end{proof}
An observation similar to the lemma above was already made in a recent paper by Bille et al.~\cite{BGSV12}. 

\subsubsection{Implementation}
The generalization works in the following way: we categorize the light children of a node into at most $\lceil1/\epsilon\rceil$ categories. Each category $i$ contains the children whose rank is between $2^{\log^{(i-1)\epsilon} n}$ and $2^{\log^{i\epsilon} n}-1$. 
To implement our generalized method, we reuse exactly the same data structures used in Section~\ref{section:DS_impl}, except for the following two points:
\begin{enumerate}
\item In the suffix tree, we order all the labels of the children of every suffix tree node by decreasing weight. For a given node $x$, with $t$ children, we order the children by decreasing weight and store an array $C[1..t]$ where we put at position $i$ the character that labels the child number $i$ in the ordering. 
\item Then we will use $b\cdot \lceil 1/\epsilon\rceil$ substitution stores instead of $b$ substitution stores. More precisely we store a substitution store for each position $i\in[1..b]$ and category $j\in[1..\lceil1/\epsilon\rceil]$. 
\end{enumerate}
The $\lceil 1/\epsilon\rceil$ substitution stores for the same position will all share the same weak prefix search data structure (that will store exactly the same content that was originally stored when there was a single substitution store for the position), but use $\lceil 1/\epsilon\rceil$ different $1D$ colored range reporting and prefix-sum data structures. Note that the prefix-sum data structures could possibly indicate empty substitution lists, in case there exists another non-empty substitution list with the same pair $(s_0,s_1)$. 
We now bound the space incurred by the modifications. The table $C$ uses $t\log\sigma$  bits and, when added over all the suffix tree nodes, the space becomes $O(n\log\sigma)$ bits. We will prove that all the substitution stores with the same category $j$ will use in total $O(n\log^{1+\epsilon} n)$ bits of space. 
In order to reduce the space used by the substitution lists (stored using the $1D$ colored range reporting and the prefix-sum data structures), we will use variable lengths to code the characters. More precisely, instead of storing the character (in the $1D$ colored range reporting data structure) using $\log\sigma$ bits, we will store the rank of the child (among the children of the parent of that child) labeled by that character. 

Then using the table $C$, we can easily recover the character. The fact that category $j$ contains nodes of rank at most $2^{\log^{j\epsilon} n}-1$, means that we only need $\log (2^{\log^{j\epsilon} n}-1)\leq \log^{j\epsilon} n$ bits to encode each substitution list element in a substitution store of category $j$. 

We note that each root-to-leaf path has at most $(\log n)^{1-(j-1)\epsilon}$ nodes of category $j$. This is the case as nodes in category $j$ have rank at least $r=2^{(\log n)^{(j-1)\epsilon}}$. Thus by Lemma~\ref{lemma:heav_tree_lemma} any root-to-leaf path has at most $\log_r n=\frac{\log n}{\log r}=\frac{\log n}{(\log n)^{(j-1)\epsilon}}=(\log n)^{1-(j-1)\epsilon}$ light edges of rank at least $r$, and we deduce that the substitution stores of category $j$ will have at most $O(n(\log n)^{1-(j-1)\epsilon})$ elements. As each element is encoded using $\log^{j\epsilon} n$ bits, we conclude that the total space to encode the substitution lists of category $j$ is $O(\log^{j\epsilon} n\cdot \log^{1-(j-1)\epsilon}n)=O(n\log^{1+\epsilon}n)$ bits of space.

When multiplied over all the (at most) $\lceil1/\epsilon\rceil$ categories, we get total space $O(n\log^{1+\epsilon}n/\epsilon)$ bits.
This bounds the space used by the $1D$ colored range reporting data structure. The space used by the prefix-sum data structures will be at most $O(n\log n/\epsilon)$, since for every non-empty substitution list associated with a pair $(s_0,s_1)$, we could have up to $\lceil 1/\epsilon\rceil$-1 empty ones. 
The space used by the weak prefix search data structure remains the same. 
Overall, the total space will be $O(n\log^{1+\epsilon}n/\epsilon)$ bits. 
\subsubsection{Queries}
We only describe queries for substitutions. A query will work in the following way: we traverse the suffix tree top-down and for each node of the suffix tree at depth $d$, we use the label of the heavy child as a candidate character for substitution (or insertion). 
To get candidate characters that label light children, we query the substitutions stores $(d,j)$ for every $d\in[1..\lceil1/\epsilon\rceil]$ separately. We first start by querying the weak prefix search data structure (which is shared by all the 
$\lceil1/\epsilon\rceil$ substitution stores). We then query successively all the $\lceil1/\epsilon\rceil$ prefix-sum data structures associated with the same pair $(s_0,s_1)$ until we find the first non-empty one, in which case we retrieve the characters in the substitution list from the $1D$ colored range reporting structure and check that the first character gives a match (exactly as in Section~\ref{section:DS_impl}). If that is the case, we continue reporting all the other characters in the substitution list and in all the other non-empty substitution lists associated with the pair $(s_0,s_1)$. If the first substituted character did not give a match, then we know that all the other substitution lists will also  not give a match. 

Since we have in total up to $\lceil m/\epsilon\rceil-1$ empty substitution stores, and a query on each substitution store takes constant time independent of $\epsilon$~\footnote{Note that a query on all but the first non-empty substitution store takes constant time per reported character, since it involves querying the prefix-sum and the $1D$ colored range reporting data structures which answer in, respectively, constant time and constant time per element.}, the new query algorithm will incur an additive $O(m/\epsilon)$ term in the query time. 

\subsection{Handling very large $\sigma$}
Throughout the paper, we have assumed that $\sigma\leq n$. In this section we show how to handle the case $\sigma\geq n$. To handle this case, we can use an indexable dictionary $D$ that stores the set $\Sigma_T$ of characters which appear in the text $T$. As a byproduct, the indexable dictionary will associate a unique number $r_\alpha\in[1..|\Sigma_T|]$ to each character $\alpha\in \Sigma_T$. Moreover, the dictionary should also be able to do the reverse mapping, returning $\alpha$ when given $r_\alpha$. 
In order to implement the indexable dictionary, we can use the data structure of~\cite{RRS07} which supports the mapping and reverse mapping in constant time and uses $O(|\Sigma_T|\log\sigma)\leq O(n\log\sigma)$ bits of space which is less than the space used by the index of Theorem~\ref{full_arb_theorem}. 
Then, we replace the character $\alpha$ in the text $T$ with $r_\alpha$, giving a new text $T'$. Then the size of the alphabet of $T'$ will be $|\Sigma_T|\leq n$ and we can build the index of Theorem~\ref{full_arb_theorem} on $T'$ rather than $T$. 

To implement a query on a string $q[1..m]$, we first query the dictionary $D$ for each character $q[i]$ obtaining the number $r_{q[i]}$. We now have three cases:
\begin{enumerate}
\item If all characters of $q$ have been found in $D$, then we replace each $q[i]$ with $r_{q[i]}$ obtaining a new string $q'$ and query the index of Theorem~\ref{full_arb_theorem}. 
Finally when querying the index, we should use the dictionary $D$ to translate the characters output by the index to the original alphabet (that is given $r_\alpha$, return $\alpha$). 
\item If exactly one character of $q$ was not found in the dictionary, we conclude that the only two possible errors are the deletion or substitution of that character. We thus can just query the suffix tree for an exact matches on the two strings obtained by deleting or substituting the character. 
\item If two or more of the characters of $q$ were not found in $D$, then this means we have more than one error and the result of the query will be empty. 
\end{enumerate}


\section{Conclusion}
\label{section:conclusion}
In this paper we have presented new and improved solutions for indexing a text $T$ for substring matching queries with one edit error. 
Given a query string $q$ of length $m$ our indices are able to find the $occ$ occurrences of substrings of $T$ at edit distance $1$ from $q$ in time $O(m+occ)$. Our first index is only suitable for constant alphabet sizes and uses $O(n\log^\epsilon n)$  bits of space, where $n$ is the length of the text $T$ and $\epsilon$ is any constant such that $0<\epsilon<1$. The same index can be tuned to achieve a different tradeoff, namely to use $O(n\log\log n)$ bits of space while answering to queries in time $O((m+occ)\log\log n)$. 

The other index presented in this paper is suitable for an arbitrary alphabet size and uses $O(n\log^{1+\epsilon} n)$ bits of space while answering to queries in optimal time $O(m+occ)$. 

It is an interesting open problem to show whether our indices are optimal or whether they can be improved. It is interesting to compare our indices with known results for substring matching with zero errors. For constant-sized alphabets, the best known results achieve  optimal $O(m/\log n+occ)$ query time using $O(n\log^\epsilon n)$ bits of space~\cite{GV05}. The $m/\log n$ is known to be the best possible for constant-sized alphabets as the time needed to read the query string is at least $\Omega(m/\log n)$ (assuming that the characters of the strings are tightly packed). Our first index achieves the same space usage but has a $\log n$ factor slowdown in the length of the query string. It is an interesting open problem to show whether this $\Omega(\log n)$ factor is inherent in the substring matching with one error when the space is required to be small. 

For arbitrary alphabet sizes, the $O(m+occ)$ query time of our second solution is optimal as it is known that $\Omega(m)$ time is needed in the worst case to read the query string when the alphabet is of arbitrary size~\footnote{Each character could occupy up to $\log n$ bits which means that we need at least $\Omega(1)$ time to read each character in a our RAM model with $w=\Theta(\log n)$.}. On the other hand the space usage $O(n\log^{1+\epsilon} n)$ is a factor $\log^\epsilon n$ more than the $O(n\log n)$ bits used by standard indices for substring matching with zero errors (suffix arrays and suffix trees). It is an intriguing open problem to show whether this $\log^\epsilon n$ factor is inherent in substring matching with one error or whether it can be removed. 

\section*{Acknowledgements}
The author wishes to thank the anonymous reviewers for their helpful comments and corrections
and Travis and Meg Gagie for their many helpful corrections and suggestions.

\small 
\bibliography{full_text_edit_dist} 
\normalsize

\newpage

\end{document}